\newcommand{\nn}{\nonumber \\}
\newcommand{\bra}[1]{\langle{#1}|}
\newcommand{\ket}[1]{|{#1}\rangle}
\newcommand{\braket}[2]{\langle{#1}|{#2}\rangle}
\newcommand{\Sj}{\sigma_j}
\newcommand{\UN}{N}
\newcommand{\HN}{M}
\newcommand{\hj}{\ell}
\newcommand{\hm}{L}
\newcommand{\sparseness}{D}
\newcommand{\s}{s}
\newcommand{\short}{\aleph}
\newcommand{\F}{F}
\newcommand{\La}{\Lambda}
\newcommand{\Li}{\Lambda_1}
\newcommand{\Lm}{\Lambda_{\max}}
\newcommand{\Ha}{\|H\|}
\newcommand{\Hi}{\|H\|_1}
\newcommand{\Hm}{\|H\|_{\max}}
\newcommand{\sca}{\Upsilon}
\newcommand{\phia}{\varphi}
\newcommand{\phib}{\phi}
\newcommand{\psia}{\eta}
\newcommand{\psib}{\psi}
\newcommand{\psic}{\zeta}
\newcommand{\CC}{\mathbb{C}}
\newcommand{\defeq}{\colonequals}
\newcommand{\poly}{\mathop{\mathrm{poly}}}
\newcommand{\diag}{\mathop{\mathrm{diag}}}
\newcommand{\unlaziness}{\varepsilon}
\newcommand{\laz}{\epsilon}
\newcommand{\error}{\delta}
\newcommand{\splitsmall}{\xi}
\newcommand{\numprecision}{\zeta}
\newcommand{\brk}{{\rm break}}
\newcommand{\ceil}[1]{\lceil{#1}\rceil}
\newtheorem{theorem}{Theorem}
\newtheorem{lemma}[theorem]{Lemma}
\newtheorem{corollary}[theorem]{Corollary}
\begin{document}

\title{Black-box Hamiltonian simulation and unitary implementation}

\author{Dominic W. Berry}
\affiliation{Institute for Quantum Computing, University of Waterloo, Waterloo, Ontario N2L 3G1, Canada}
\affiliation{Department of Physics and Astronomy, Macquarie University, Sydney, NSW 2109, Australia}
\author{Andrew M. Childs}
\affiliation{Institute for Quantum Computing, University of Waterloo, Waterloo, Ontario N2L 3G1, Canada}
\affiliation{Department of Combinatorics \& Optimization, University of Waterloo, Waterloo, Ontario N2L 3G1, Canada}

\begin{abstract}
We present general methods for simulating black-box Hamiltonians using quantum walks. These techniques have two main applications: simulating sparse Hamiltonians and implementing black-box unitary operations.
In particular, we give the best known simulation of sparse Hamiltonians with constant precision. Our method has complexity linear in both the sparseness $\sparseness$ (the maximum number of nonzero elements in a column) and the evolution time $t$, whereas previous methods had complexity scaling as $\sparseness^4$ and were superlinear in $t$.
We also consider the task of implementing an arbitrary unitary operation given a black-box description of its matrix elements.  Whereas standard methods for performing an explicitly specified $\UN\times\UN$ unitary operation use $\tilde O(\UN^2)$ elementary gates, we show that a black-box unitary can be performed with bounded error using $O(\UN^{2/3} (\log\log \UN)^{4/3})$ queries to its matrix elements. In fact, except for pathological cases, it appears that most unitaries can be performed with only $\tilde O(\sqrt{\UN})$ queries, which is optimal.
\end{abstract}
\maketitle

\section{Introduction}

One of the major applications of quantum computation is the simulation of Hamiltonian dynamics.  Hamiltonian simulation is the basis for simulating quantum systems---the original motivation for quantum computers \cite{Feynman82}---and also has applications to quantum algorithms \cite{Childs03,Farhi08,Childs09b,Harrow09}.

An explicit procedure for simulating local Hamiltonians on a quantum computer was given by Lloyd \cite{Lloyd96}. This result was later substantially generalized to the simulation of sparse Hamiltonians by Aharonov and Ta-Shma \cite{Aharonov03}. References \cite{Childs04,Berry07} improved these results by providing a simulation scheme with complexity that scales close to linearly in the evolution time $t$ and as the fourth power of the sparseness parameter $\sparseness$, the maximum number of nonzero elements in a column.

In this paper, we consider the general task of simulating Hamiltonians, without necessarily assuming sparsity.  As particular applications, we provide improved methods for simulating sparse Hamiltonians and implementing unitary transformations. Similar to previous work on Hamiltonian simulation \cite{Aharonov03,Childs04,Berry07}, we assume throughout that the Hamiltonian is specified by an oracle.
Specifically, in our model,
a black-box function computes the matrix element $H_{jk}$ for any desired row index $j \in \{1,2,\ldots,\HN\}$ and column index $k \in \{1,2,\ldots,\HN\}$.
To simultaneously treat the case of sparse Hamiltonians, we also consider a black box that computes the positions of the nonzero matrix elements.  (See Sec.\ \ref{sec:model} for a detailed discussion of our model and its relationship to prior work.)

An algorithm for computing the matrix elements of $H$ can be used to construct such a black box. In particular, this means that the black-box model applies to common physical Hamiltonians such as those considered by Lloyd \cite{Lloyd96}. Such Hamiltonians are a sum of local terms, each acting on a limited number of subsystems. Thus they are sparse, and there is an efficient method for computing the matrix elements of the overall Hamiltonian. An advantage of the black-box model of Hamiltonian simulation is that it can be applied not only to physical Hamiltonians, but as a basis for designing algorithms for other problems \cite{Childs03,Childs09b,Harrow09}.

In contrast to previous work on Hamiltonian simulation \cite{Lloyd96,Aharonov03,Childs03,Childs04,Berry07}, most of which is based on Lie-Trotter-Suzuki formulae, we use a new approach based on a quantum walk \cite{Childs08}. A limitation of the Lie-Trotter-Suzuki approach is that it relies on limiting the error by using many short time steps. As a result, the complexity of the simulation always scales superlinearly in $t$. The quantum walk approach provides scaling that is strictly linear in $t$ \cite{Childs08}, which is known to be optimal \cite{Berry07}. Another limitation of the Lie-Trotter-Suzuki approach is that it relies on decomposing the Hamiltonian into 1-sparse matrices, which results in poor scaling in the sparseness $\sparseness$. In Ref.\ \cite{Berry07} the scaling was $\tilde O(\sparseness^4)$, which was recently improved to $\tilde O(\sparseness^3)$ \cite{Childs11}.\footnote{We use a tilde to indicate that subpolynomial scaling is ignored---that is, $f=\tilde O(g)$ if $f=O(g^{1+\eta})$ for any $\eta>0$.}
In contrast, by using quantum walks we improve the scaling to linear in $\sparseness$.  This represents the best known constant-precision simulation of sparse Hamiltonians.  Furthermore, if we are willing to accept superlinear scaling in $t$, the scaling in $\sparseness$ may be improved to $\tilde O(\sparseness^{2/3})$ in general, and in many cases to $\tilde O(\sparseness^{1/2})$.

The quantum walk approach to Hamiltonian simulation was proposed in Ref.\ \cite{Childs08}, though without providing an explicit method to implement the steps of the quantum walk in the general case. Here we present a complete method for Hamiltonian simulation by showing how to implement the steps of the quantum walk for a general Hamiltonian that may or may not be sparse.
We use the method of Ref.\ \cite{Childs08} together with a range of other tools, which we combine and improve on in nontrivial ways to obtain our final result.
In particular, our approach introduces the following techniques.
\begin{enumerate}
\item In Sec.\ \ref{sec:spar}, we modify the method of Ref.\ \cite{Childs08} by using phase estimation to correct a lazy quantum walk, giving a more efficient simulation.
\item In Sec.\ \ref{sec:impl}, we describe how to perform steps of the quantum walk using state preparation by amplitude amplification (similar to a method proposed in Ref.\ \cite{Grover00}), improving the efficiency in the non-sparse case. 
\item We modify the state to be prepared by using an ancilla qubit to satisfy an orthogonality condition for the lazy quantum walk (compare Eqs.\ \eqref{eq:alt2} and \eqref{eq:alt3a}). This facilitates more efficient state preparation by amplitude amplification in Sec.\ \ref{sec:impl} (outperforming direct application of Ref.\ \cite{Grover00}), and even allows us to prepare the state in only $O(1)$ queries in the sparse case described in Sec.\ \ref{sec:spar}.
\item In Sec.\ \ref{sec:break}, we further improve the simulation in the non-sparse case by decomposing the Hamiltonian as a sum of terms and recombining these terms using Lie-Trotter-Suzuki formulae.  The decomposition depends on the magnitudes of the matrix elements, giving an approach that is fundamentally different from previous applications of Lie-Trotter-Suzuki formulae.
\end{enumerate}

While our results on simulating non-sparse Hamiltonians may be of interest in their own right, additional motivation for studying this problem comes from the related task of implementing general unitary transformations.  Standard methods for implementing an arbitrary $\UN \times \UN$ unitary transformation on a quantum computer work by decomposing it into a product of two-level unitary matrices \cite{Reck94,Barenco95} and performing the two-level unitaries via the Solovay-Kitaev theorem \cite{SoloKit1,SoloKit2,Harrow02}. This method uses $\UN^2 \poly(\log \UN)$ gates. Since counting arguments show that $\Omega(\UN^2)$ elementary gates are required even to approximate a general unitary transformation \cite{Barenco95,Knill95,Harrow02}, such an implementation is nearly optimal. 

Instead of considering an explicit unitary operation, we study the problem of performing a unitary transformation specified by a black box for its matrix elements, similar to the black box for a non-sparse Hamiltonian described above.  In such an oracle model, counting arguments for unitary implementation no longer apply, since the black box depends on the unitary.  While the question of how many queries are required to implement a general unitary in this model seems quite natural, to the best of our knowledge it has not been studied previously.

Implementation of unitaries is closely connected to Hamiltonian simulation, because one can implement a unitary by simulating a related Hamiltonian. Reference \cite{Jordan09} used this idea to provide a method for implementing sparse unitaries.  However, their approach relies on a decomposition into $1$-sparse Hamiltonians, so it performs poorly in the non-sparse case.

Using Hamiltonian simulation via quantum walks, we show that the complexity of implementing a general (non-sparse) $\UN \times \UN$ unitary scales with $\UN$ as $\tilde O(\UN^{2/3})$. We also present numerical evidence that typical unitaries can be implemented in only $\tilde O(\sqrt{N})$ queries (although it is possible to construct unitaries for which our method uses more queries). This is much less than the $\Omega(\UN^2)$ elementary gates required to implement a general unitary given an explicit description instead of a black box. The best lower bound we are aware of is $\Omega(\sqrt{\UN})$, because implementation of a black-box unitary operation can be used to solve a search problem \cite{Grovopt}.

Implementation of black-box unitary transformations is closely related to the task of preparing an $N$-dimensional quantum state given a black box for its amplitudes in a fixed basis.  Grover showed that the query complexity of this task is $\Theta(\sqrt{N})$ \cite{Grover00}. As mentioned above, we build on his technique in order to implement black-box unitaries. It is an open question whether black-box unitaries can be implemented in $O(\sqrt{N})$ queries in general, or if there is a fundamental separation between the query complexity of implementing unitaries and the query complexity of preparing states.

Since implementing unitary transformations is a basic task in quantum computation, we expect this result to have applications to quantum algorithms.  For example, our approach could serve as an alternative to previous methods for efficiently implementing general unitary transformations on a logarithmic number of qubits, with improved performance provided the matrix elements can be computed quickly.

The remainder of this article is organi{\s}ed as follows.
In Sec.~\ref{sec:summ} we give a technical summary of our main contributions.
Then, in Sec.~\ref{sec:hamsim}, we summari{\s}e the method of Ref.~\cite{Childs08} for simulating Hamiltonian evolution.
Our main result, that a black-box Hamiltonian can be simulated with $\tilde O(\sparseness^{2/3})$ queries to its matrix elements, is proven in Sec.~\ref{sec:break}, building on a foundation established in Secs.~\ref{sec:spar} through \ref{sec:full}.
Some of these intermediate results may be of interest in their own right; in particular, in Sec.\ \ref{sec:spar}, we present a simple method to perform the steps of the quantum walk of Ref.~\cite{Childs08} that is especially suitable for sparse Hamiltonians.
We explain how Hamiltonian simulation can be used to implement black-box unitaries in Sec.~\ref{sec:unit}.
In Sec.~\ref{sec:examp} we give some examples of the simulation as applied to particular unitary operations.
We conclude in Sec.~\ref{sec:conc} with a summary of the results and a discussion of some open problems.

\section{Model and results}
\label{sec:summ}

\subsection{Model}
\label{sec:model}

We formulate the Hamiltonian simulation and unitary implementation problems using an oracle model, in which a description of the Hamiltonian or unitary is provided by a black box.
For Hamiltonian simulation, the matrix elements of some Hermitian matrix $H \in \CC^{\HN \times \HN}$ are given by a black box $O_H$ acting as
\begin{equation}
\label{eq:horacle}
O_H \ket{j,k}\ket{z} = \ket{j,k}\ket{z \oplus H_{jk}},
\end{equation}
where $j,k \in \{1,2,\ldots,\HN\}$.
Here the matrix element $H_{jk}$ is represented by its real and imaginary parts written in binary, and $\oplus$ denotes the bitwise XOR of such representations.
Similarly, for the problem of implementing a unitary $U \in \CC^{\UN \times \UN}$, we are given a black box $O_U$ acting as
\begin{equation}
\label{eq:uoracle}
O_U \ket{j,k}\ket{z} = \ket{j,k}\ket{z \oplus U_{jk}},
\end{equation}
where $j,k \in \{1,2,\ldots,\UN\}$.
The error of the simulation or implementation must be no greater than $\error$ as quantified by the trace distance.
In practice, the black box $O_H$ or $O_U$ provides the matrix elements to some finite precision, $\numprecision$, using $O(\log\frac{1}{\numprecision})$ qubits. We assume that $\numprecision\ll\error$, so the imprecision in this approximation does not affect the analysis.

We can also take advantage of sparsity if it is possible to compute the positions of nonzero matrix elements.
Specifically, suppose there are at most $\sparseness$ nonzero elements in each row or column. For Hamiltonian simulation, suppose that in addition to the black box $O_H$, we are given a black box $O_\F$ acting as
\begin{equation}
O_\F \ket{j,k} = \ket{j,f(j,k)}
\end{equation}
for any $j \in \{1,2,\ldots,\HN\}$ and $k \in \{1,2,\ldots,\sparseness\}$,
where the function $f(j,k)$ gives the row index of the $k$th nonzero element in column $j$ (or the row index of any zero element when there are fewer than $k$ nonzero elements in column $j$).

Note that $O_\F$ computes the row index in place.  In contrast, some previous work on simulating sparse Hamiltonians \cite{Berry07,Childs11} assumes that a single query only computes $f(j,k)$ given $j$ and $k$, i.e., performs the isometry $O'_\F$ acting as $O'_\F \ket{j,k} = \ket{j,k,f(j,k)}$. The oracle $O_\F$ can be used to produce such an oracle in one query, simply by copying $k$ to a third register before calling $O_\F$. Therefore, all the upper bounds for the complexity in \cite{Berry07,Childs11} hold for the oracle $O_\F$. Furthermore, the algorithms in \cite{Berry07,Childs11} do not depend on this aspect of the oracle, so changing the oracle in those algorithms would not lead to improved upper bounds.
Thus our results may be directly compared with those of \cite{Berry07,Childs11}.

To construct the oracle $O_F$, it suffices to first compute $f(j,k)$ for a given $j$, and then compute $k$ given $j$ and $f(j,k)$ in order to erase the register encoding $k$.  In contrast, $O'_\F$ does not uncompute $k$.  For realistic cases such as the local Hamiltonians considered by Lloyd \cite{Lloyd96}, determining $k$ given $j$ and $f(j,k)$ is not difficult, so $O_\F$ is a realistic representation of the resources used.

If desired, one can quantify the resources used by our simulations in terms of queries to the black box $O'_\F$ instead of to $O_\F$.  Even if $k$ cannot be computed directly, one can implement $O_\F$ with $O'_\F$ using additional queries to uncompute $k$. If the function $f$ provides the nonzero elements in sorted order, one can find $k$ by binary search, increasing the number of queries by a factor of only $\log\sparseness$. In general, one can use Grover's algorithm to find $k$, increasing the number of queries by a factor of $O(\sqrt\sparseness)$.

Another model used by some papers on sparse Hamiltonian simulation is that for any given $j$, a single query reveals all the nonzero entries in the $j$th column, i.e., the values $f(j,1),\ldots,f(j,\sparseness)$ \cite{Aharonov03,Childs03,Childs04}. With that model, the black box $O_\F$ can be implemented using two queries, whereas $\sparseness$ calls to $O_\F$ are required to compute all the values $f(j,1),\ldots,f(j,\sparseness)$.  As Refs.\ \cite{Aharonov03,Childs03,Childs04} are primarily concerned with showing polynomial scaling in $\sparseness$, such a difference is unimportant.

We emphasi{\s}e that in the present work, we do not require $\sparseness=\poly(\log \HN)$; our methods apply for any $\sparseness\le\HN$. If the Hamiltonian is not sparse, or if it is sparse but the nonzero elements are in unknown positions, then we can simply take $\sparseness=\HN$ and let $O_\F$ be the identity operation. Thus, all the results of the paper hold for non-sparse cases, with $\sparseness=\HN$.  In particular, when considering the problem of unitary implementation, we do not assume sparsity, so the black box $O_\F$ is not required.

We assume that information about the Hamiltonian or unitary can only be obtained by querying the oracle, so in particular we do not know the norms of $H$ or $U$ (except for the trivial fact that $\|U\|=1$). However, we assume that we do have upper bounds on various norms: we are given constants $\La,\Li,\Lm$ satisfying $\La\ge\Ha$, $\Li\ge\Hi$, and $\Lm\ge\Hm$, where $\Ha$ denotes the spectral norm of $H$, $\Hi \defeq \max_j \sum_{k=1}^\HN |H_{jk}|$, and $\Hm \defeq \max_{j,k} |H_{jk}|$.

\subsection{Results}

Our first main result, proved in Sec.\ \ref{sec:spar}, is that a Hamiltonian can be simulated with scaling linear in both $\Ha t$ and $\sparseness$.

\begin{theorem}
\label{theorem1}
For a given Hamiltonian $H$, let $\La\ge\Ha$ and $\Lm\ge\Hm$.  Then the evolution under $H$ for time $t$ can be simulated with error at most $\error\in(0,1]$ using
\begin{equation}
\label{eq:other}
O\left( \frac{\La t}{\sqrt\error} + \sparseness \Lm t + 1 \right)
\end{equation}
queries to $O_H$ and $O_\F$.
\end{theorem}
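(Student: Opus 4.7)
The plan is to build on the quantum walk $W$ from Ref.~\cite{Childs08}, recalled in Section~\ref{sec:hamsim}, whose spectrum on the relevant invariant subspace consists of pairs $\pm e^{\pm i\arcsin(\lambda/\rt)}$ indexed by the eigenvalues $\lambda$ of $H$, for a normalization $\rt\le\sparseness\Lm$. The first step is to give an $O(1)$-query implementation of a single application of $W$. Because each column of $H$ has at most $\sparseness$ potentially-nonzero entries---whose row indices are produced in superposition by $O_\F$ and whose amplitudes are loaded by one call to $O_H$---the column state $\ket{\psia_j}$ that defines the walk can be prepared directly. An auxiliary ``lazy'' ancilla qubit supplies the extra amplitude needed to enforce the orthogonality between $\ket{\psia_j}$ and its swap partner on which the reflection-based walk relies.

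Given this building block, the strategy is to simulate $e^{-iHt}$ by combining $N=O(\sparseness\Lm t)$ direct applications of $W$ with a phase-estimation-based correction of the residual $\arcsin$-nonlinearity. On the $\lambda$-eigenspace, $W^N$ accumulates the phase $\pm N\arcsin(\lambda/\rt)$; choosing $N\approx\rt t$ makes this match $\pm\lambda t$ to leading order in $\lambda/\rt$. I would then run phase estimation on $W$ into an ancilla register to estimate $\theta_\lambda=\arcsin(\lambda/\rt)$, use reversible arithmetic to compute and apply the correction phase $\rt t\sin\tilde\theta-N\tilde\theta$ from the estimate $\tilde\theta$, and uncompute the phase-estimation register. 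For $\tilde\theta=\theta_\lambda$ this yields exactly $e^{-i\lambda t}$.

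The two terms of~\eqref{eq:other} correspond to the two types of query. The direct walk evolution accounts for $O(\sparseness\Lm t)$ oracle calls. For the phase estimation, the key observation is that the correction function $g(\theta)=\rt t\sin\theta-N\theta$ has slope $g'(\theta_\lambda)=\rt t(\cos\theta_\lambda-1)=O(\La^2 t/\rt)$---quadratically small in $\lambda/\rt$ because the linear part of the correction has been cancelled by the choice of $N$. A careful analysis of how the phase-estimation outcomes propagate through $g$ then shows that $O(\La t/\sqrt\error)$ additional applications of $W$ are sufficient; the $+1$ absorbs edge cases like very small $t$ or $\error$ close to one.

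The main obstacle will be this final error analysis. Phase estimation returns a heavy-tailed distribution of outcomes rather than a fixed-precision estimate, and a naive bound using a worst-case deviation or a crude union bound over bins would give scaling $1/\error$ rather than the desired $1/\sqrt\error$. Obtaining the sharper scaling requires exploiting both the quadratic smallness of $g'$ at $\theta_\lambda$ and the detailed shape of the phase-estimation distribution---essentially a second-moment argument showing that the coherent contributions from all outcomes combine to a small trace-distance error. Once this is in place, verifying that the lazy-walk modification preserves the $\arcsin$ eigenrelation on the augmented subspace is essentially mechanical.
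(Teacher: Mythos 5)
There is a genuine gap in the central quantitative step. You fix the walk normalization at $\rt=\Theta(\sparseness\Lm)$, apply $W^N$ with $N\approx\rt t$, and claim that a phase-estimation correction using only $O(\La t/\sqrt\error)$ further applications of $W$ brings the error down to $\error$. But the error of the correction is, to first order, $|g'(\theta_\lambda)|\,|\tilde\theta-\theta_\lambda|$ with $|g'(\theta_\lambda)|=\rt t\,(1-\cos\theta_\lambda)=\Theta(\lambda^2 t/\rt)$, and phase estimation with $m$ uses of $W$ cannot achieve $|\tilde\theta-\theta_\lambda|$ better than $\Omega(1/m)$ (Heisenberg limit). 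After uncomputing the estimation register, the induced trace-distance error on a $\lambda$-eigenvector is governed by $|g'(\theta_\lambda)|\cdot\mathrm{std}(\tilde\theta)=\Omega(\lambda^2 t/(\rt m))$; this is a first-order sensitivity, so no second-moment or tail-shape argument about the phase-estimation distribution can remove it. Since $\La$ can be as large as $\sparseness\Lm$ (e.g.\ $H$ proportional to the all-ones matrix), demanding error $\le\error$ forces $m=\Omega(\La^2 t/(\rt\error))$, which exceeds your claimed $O(\La t/\sqrt\error)$ exactly when $\La>\rt\sqrt\error$ --- i.e.\ precisely in the regime where the $\La t/\sqrt\error$ term of Eq.~\eqref{eq:other} dominates. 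With your parameters the residual phase error is $\approx\La\sqrt\error/(\sparseness\Lm)$, which is much larger than $\error$ in that regime, so the claimed bound does not follow.

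The paper's proof avoids this not by decoupling ``walk steps'' from ``phase-estimation steps'' but by making the walk lazier: the ancilla-flagged states $\ket{\phib_j}$ allow the normalization to be taken as $\Li/\unlaziness=\sparseness X$ with $X=\Theta(\max\{\Lm,\ \La/(\sparseness\sqrt\error)\})$ (Eq.~\eqref{eq:xval}), so that $|\tilde\lambda|=O(\min\{\La/(\sparseness\Lm),\sqrt\error\})$. Then a single count $d=O(\sparseness\Lm t+\La t/\sqrt\error+1)$ of walk steps serves both purposes: $V^{\pm d}$ implements $e^{\mp i d\arcsin\tilde\lambda}$ (after a one-bit phase estimation to identify the $\pm$ branch, with misidentification error $O(\tilde\lambda^2)$ --- a step your outline also needs but glosses over), and phase estimation at precision $1/d$ corrects $\tilde\lambda-\arcsin\tilde\lambda$ with residual error $O(\tilde\lambda^2)=O(\error)$ (Lemma~\ref{lemma2}); each step costs $O(1)$ queries by Lemma~\ref{lemma1}. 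Thus both terms in Eq.~\eqref{eq:other} arise from the single quantity $d$, not from a split into direct evolution versus estimation. Your other ingredients --- the $O(1)$-query column-state preparation with an ancilla flag and the idea of correcting the $\arcsin$ nonlinearity using a coherent estimate --- do match the paper's Lemmas~\ref{lemma1} and~\ref{lemma2}; the missing idea is that the laziness parameter itself must grow with $1/\sqrt\error$ when $\La/\sqrt\error>\sparseness\Lm$.
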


This result is suitable for simulation of sparse Hamiltonians. The next main result, proved in Sec.\ \ref{sec:largenorms}, gives improved scaling in $\sparseness$ at the expense of worse scaling in $\Ha t$.  This result may be preferable for non-sparse Hamiltonians.

\begin{theorem}
\label{theorem2}
Let $\La\ge\Ha$.  The evolution under the Hamiltonian $H$ for time $t$ can be simulated with error at most $\error\in(0,1]$ using
\begin{equation}
\label{eq:th2}
O\left( \sparseness^{2/3}[(\log\log\sparseness)\La t]^{4/3} \error^{-1/3} \right)
\end{equation}
queries to $O_H$ and $O_\F$, provided $\error\sparseness > \La t > \sqrt\error$.
\end{theorem}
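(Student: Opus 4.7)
The plan is to decompose $H$ according to the magnitudes of its entries, simulate each piece with Theorem~\ref{theorem1}, and recombine the simulations via a high-order Lie-Trotter-Suzuki formula. For a threshold $x$ to be chosen later, I would write $H = H_L + H_S$, where $H_L$ is obtained from $H$ by zeroing every entry of magnitude less than $x$, and $H_S = H - H_L$. Because $H$ is Hermitian and the thresholding is symmetric in the two indices, $H_L$ and $H_S$ are also Hermitian. Black boxes $O_{H_L}$ and $O_{H_S}$ can be implemented from $O_H$ with $O(1)$ overhead by thresholding the returned value, and a sparsity oracle for $H_L$ can be obtained from $O_\F$ by scanning the at most $\sparseness$ candidate positions per column and retaining only those of magnitude at least $x$.

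The parameter estimates for the two pieces come from Cauchy-Schwarz applied to columns of $H$. Since $\sum_j |H_{jk}|^2 = \|He_k\|^2 \leq \Ha^2 \leq \La^2$, the number of column entries of magnitude at least $x$ is at most $\La^2/x^2$; thus $H_L$ has sparseness $\sparseness_L \leq \La^2/x^2$, and its maximum entry is trivially bounded by $\La$. A second Cauchy-Schwarz step bounds $\sum_{j:|H_{jk}|\geq x}|H_{jk}| \leq \sqrt{\sparseness_L}\,\La \leq \La^2/x$, so $\|H_L\| \leq \|H_L\|_1 \leq \La^2/x$, and hence $\|H_S\| \leq \La + \La^2/x$ by the triangle inequality. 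The matrix $H_S$ retains sparseness at most $\sparseness$ and maximum entry at most $x$.

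Next, I would simulate $e^{-iHt}$ using a Suzuki product formula of order $2k$ with $r$ segments, decomposing each segment into $O(5^{k-1})$ atomic exponentials of $H_L$ and $H_S$ of duration $\tau_a = O(t/(r 5^{k-1}))$, and simulating each via Theorem~\ref{theorem1} with error budget $O(\error/(r 5^{k-1}))$. Summing the three cost terms of Theorem~\ref{theorem1} over all atomic exponentials yields total query cost
\begin{equation*}
O\!\left((\|H_L\|+\|H_S\|)\,t\,\sqrt{r 5^{k}/\error} + (\La^3/x^2 + \sparseness x)\,t + r 5^{k}\right).
\end{equation*}
Choose $x = \La/\sparseness^{1/3}$ to equate $\La^3/x^2$ with $\sparseness x$ at value $\La\sparseness^{2/3}$, which also gives $\|H_L\|,\|H_S\| = O(\La\sparseness^{1/3})$. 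Take $k = O(\log\log\sparseness)$ so that $5^k$ is polylogarithmic in $\sparseness$ and the Suzuki overhead $(\La\sparseness^{1/3}t/\error)^{1/(2k)}$ is subpolynomial. Finally take $r \sim \sparseness^{2/3}(\La t)^{2/3}\error^{1/3}$; under the hypothesis $\error\sparseness > \La t > \sqrt{\error}$, this exceeds the Suzuki error minimum and renders the first cost term dominant at $O(\sparseness^{2/3}(\La t)^{4/3}\error^{-1/3})$.

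The hard part will be the combinatorial bookkeeping: simultaneously verifying that the chosen $r$ satisfies the Suzuki error constraint, that the balance in $x$ controls the second cost term, and that the first term dominates the others in the stated regime. A subtle issue is that neither $H_L$ nor $H_S$ need have norm bounded by $\La$ (each can reach $\La^2/x = \La\sparseness^{1/3}$), so the Suzuki error analysis must be carried out with these larger norms while ultimately collapsing to the stated $\sparseness^{2/3}$ and $(\La t)^{4/3}$ scalings. The factor $(\log\log\sparseness)^{4/3}$ in the final bound arises from the $5^k$ Suzuki overhead at $k \sim \log\log\sparseness$ combined with the Theorem~\ref{theorem1} per-step analysis.
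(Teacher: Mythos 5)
There are two genuine gaps. First, your cost accounting for $H_L$ uses the reduced sparseness $\sparseness_L\le\La^2/x^2$ inside Theorem~\ref{theorem1}, but Theorem~\ref{theorem1} needs an oracle of the type $O_\F$ that locates the nonzero (here: large) entries of a column \emph{in place}. Your proposed construction---scanning the $\sparseness$ candidate positions and keeping those of magnitude at least $x$---costs $\Theta(\sparseness)$ calls to $O_H$ per invocation, and since it must be done coherently for each walk step (the column index is in superposition, so no classical preprocessing helps), this multiplies the $\Theta(\sparseness_L \La t)\approx\La\sparseness^{2/3}t$ walk steps for $H_L$ by $\Theta(\sparseness)$, destroying the bound. (If you instead keep the full sparseness $\sparseness$ for $H_L$, the second term of Theorem~\ref{theorem1} becomes $\sparseness\La t$, which exceeds the target since $\error\sparseness>\La t$.) The paper's way around this is not to build a new position oracle but to use amplitude-amplification-based state preparation (Lemmas~\ref{lemma4}--\ref{lemma6}), which works with the original $O_\F$ and exploits the small $1$-norm $\|H_L\|_1\le\La^2/x$ through the combination $\sqrt{\Lm\sparseness\Li\La}$.

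Second, even granting the oracle, the quantitative bookkeeping cannot reach the stated $(\log\log\sparseness)^{4/3}$ factor with a two-piece split, Theorem~\ref{theorem1} for both pieces, and order-$2k$ Suzuki. Because Theorem~\ref{theorem1} scales as $1/\sqrt{\error}$, splitting into $r\,5^{k}$ exponentials with per-exponential error $\error/(r5^{k})$ makes the first cost term grow like $\|H_{L,S}\|\,t\sqrt{r5^{k}/\error}$, i.e.\ like $\sqrt{r}$; meanwhile the Suzuki error with piece norms $\Theta(\La\sparseness^{1/3})$ forces $r\gtrsim 5^{k}(\La\sparseness^{1/3}t)^{1+1/2k}\error^{-1/2k}$. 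Your chosen $r\sim\sparseness^{2/3}(\La t)^{2/3}\error^{1/3}$ satisfies this only if $\error\sparseness/\La t\gtrsim 5^{3k}$ times the $1/2k$-power corrections, which the hypothesis $\error\sparseness>\La t$ does not supply (it only gives a ratio $>1$); and even in the favorable regime, with $k=\Theta(\log\log\sparseness)$ the factors $5^{k}$ and $(\La\sparseness^{1/3}t/\error)^{1/2k}=\exp(O(\log\sparseness/\log\log\sparseness))$ are polylogarithmic and superpolylogarithmic, respectively, so at best you prove a $\sparseness^{2/3+o(1)}$ bound, not Eq.~\eqref{eq:th2}. The paper instead splits into $\hm=O(\log\log\sparseness)$ pieces with cutoffs chosen so that $A_{\hj-1}/A_\hj^2$ is constant, simulates pieces $1,\dots,\hm-1$ via Lemma~\ref{lemma6} (whose cost, with error budget $\error_\hj\propto\tau_\hj$, does not degrade under fine slicing), reserves Theorem~\ref{theorem1} only for the smallest-magnitude piece, and combines everything with nested low-order ($K=1$ and $K=2$) integrators; the doubly logarithmic factor then comes solely from the number of pieces $\hm$, and the small-ratio regime $\error\sparseness/\La t\le e^3$ is handled separately by Theorem~\ref{theorem1} alone.
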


Using the correspondence between Hamiltonian simulation and unitary implementation described in Sec.\ \ref{sec:unit}, this easily implies our main result on the implementation of black-box unitaries (see Sec.\ \ref{sec:unit} for the proof).

\begin{corollary}
\label{corollary1}
A black-box unitary operation $U$ can be implemented with error at most $\error\in(0,1]$ using
\begin{equation}
\label{eq:unsca}
O\left( \UN^{2/3} (\log\log\UN)^{4/3} \error^{-1/3} \right).
\end{equation}
queries to $O_U$.
\end{corollary}

Although the above results are the best we are able to show for general non-sparse Hamiltonians and unitaries, we believe that our methods are typically more efficient.  Theorem \ref{theorem2} and Corollary \ref{corollary1} are based on a decomposition of the Hamiltonian into a sum of terms, where the nonzero matrix elements of each term have comparable size.  In the worst-case analysis of Sec.\ \ref{sec:largenorms}, we must take into account the possibility that in this decomposition, the spectral norms of the individual terms could be much larger than the spectral norm of the total Hamiltonian. Numerically, we find that this does not occur when selecting matrices at random, only when matrices are specifically designed to cause this behaviour (see Sec.\ \ref{sec:norms}). Assuming that all terms in the decomposition have comparable norms, we show in Sec.\ \ref{sec:smallnorms} that the number of queries to simulate a Hamiltonian with $\Ha \le \La$ is
\begin{equation}
O\left( (\La t)^{3/2} \sqrt{\sparseness/\error} (\log\sparseness)^{7/4} \right);
\end{equation}
correspondingly, a black-box unitary can be implemented using
\begin{equation}
O\left( \sqrt{\UN/\error} (\log\UN)^{7/4} \right)
\end{equation}
queries.

\section{Review of Hamiltonian simulation}
\label{sec:hamsim}

In this section we summari{\s}e an approach to Hamiltonian simulation based on discrete-time quantum walks \cite{Childs08}.
Throughout, $\HN$ denotes the dimension of the Hilbert space that a black-box Hamiltonian acts on, and $\UN$ the dimension of the space that a black-box unitary transformation acts on.
To construct a discrete-time quantum walk from a given Hamiltonian $H$,
the Hilbert space is expanded from $\CC^{\HN}$ to $\CC^{\HN+1} \otimes \CC^{\HN+1}$. A step of the discrete-time quantum walk is described by a unitary operator
\begin{equation}
\label{eq:walkstep}
V \defeq iS(2TT^\dagger -\openone).
\end{equation}
Here the operator $S$ swaps the two registers, i.e., $S\ket{j,k}=\ket{k,j}$ for all $j,k \in \{1,2,\ldots,\HN+1\}$. The operator $T$ is the isometry
\begin{equation}
\label{eq:tdef}
T \defeq \sum_{j=1}^\HN \ket{\psia_j}\bra{j}
\end{equation}
mapping $\ket{j}$ to $\ket{\psia_j} \defeq \ket{j}\ket{\phia_j}$, where
\begin{equation}
\label{eq:alt2}
\ket{\phia_j} \defeq \sqrt{\frac{\laz}{\Hi}}\sum_{k=1}^{\HN}\sqrt{H_{jk}^*}\ket{k}+\sqrt{1-\frac{\laz\Sj}{\Hi}}\ket{\HN+1}
\end{equation}
with
\begin{equation}
\label{eq:sigmaj}
\Sj \defeq \sum_{k=1}^{\HN} |H_{jk}|
\end{equation}
(cf.\ Eq.\ (27) of Ref.\ \cite{Childs08}).
Here $\laz \in (0,1]$ is a parameter that can be made small to obtain a \emph{lazy quantum walk}, and $\Hi \defeq \max_j \sum_{k=1}^{\HN}|H_{jk}|$.

The state $\ket{\phia_j}$ is chosen so that $\bra{\psia_j}S\ket{\psia_k}$ is proportional to $H_{jk}$.  We have
\begin{align}
  \bra{\psia_j}S\ket{\psia_k}
  &= \braket{j}{\phia_k} \braket{\phia_j}{k} \nonumber \\
  &= \frac{\epsilon}{\Hi} \sqrt{H_{kj}^*} \left(\sqrt{H_{jk}^*}\right)^* .
\label{eq:squareroots}
\end{align}
Note that caution is needed when choosing the sign of the square root.  Provided $H_{jk}$ is not a negative real number, it suffices to take the principal square root of both $H_{kj}^*$ and $H_{jk}^*$ (i.e., if $z = r e^{i \theta}$ for some $r \ge 0$ and $\theta \in (-\pi,\pi)$, define $\sqrt{z} \defeq \sqrt{r} e^{i \theta/2}$, so that $\sqrt{z} (\sqrt{z^*})^* = z$).  This choice ensures that Eq.\ \eqref{eq:squareroots} gives $\epsilon H_{jk} / \Hi$.  However, if $H_{jk}$ is a negative real number, this choice does not suffice.  Instead, for $H_{jk} \in (-\infty,0)$ with $j\ne k$, we take
\begin{equation}
\sqrt{H_{jk}^*} = \mathrm{sign}(j-k) \, i \sqrt{|H_{jk}|}.
\end{equation}
By taking a different sign above and below the diagonal, we ensure that Eq.\ \eqref{eq:squareroots} is negative, as required.

The above prescription does not handle the case where the Hamiltonian has negative diagonal elements.  To ensure that the diagonal entries are nonnegative, we simply add a multiple of the identity: given an upper bound $\Lm$ on $\Hm$, we replace $H$ with $H+\Lm\openone$.  This only changes the Hamiltonian evolution for time $t$ by a global phase of $e^{-i \Lm t}$, and the relevant norms of $H$ are increased by at most a factor of $2$.

The eigenvalues and eigenvectors of $V$ are closely related to those of $H$ \cite{Szegedy}. If we define $\tilde H$ to be the operator with matrix elements 
\begin{equation}
\label{eq:normalizedh}
\tilde H_{jk} \defeq \bra{\psia_j}S\ket{\psia_k},
\end{equation}
then Eq.\ \eqref{eq:squareroots} gives
\begin{equation}
\label{eq:h0generaldef}
\tilde H = \frac{\laz H}{\Hi}.
\end{equation}
The operators $H$ and $\tilde H$ have common eigenstates $\ket\lambda$. The corresponding eigenvalues for $H$ and $\tilde H$ are denoted $\lambda$ and $\tilde\lambda$, and are related by
\begin{equation}
\label{eq:lamrel}
\tilde \lambda = \frac{\laz\lambda}{\Hi}.
\end{equation}
Each eigenstate $\ket\lambda$ corresponds to two eigenvectors of $V$,
\begin{align}
\ket{\mu_\pm^\lambda} &\defeq \frac{1-e^{\pm i\arcsin\tilde\lambda}S}{\sqrt{2(1-\tilde\lambda^2})} T\ket{\lambda},
\end{align}
with eigenvalues
\begin{align}
\label{eq:veigenvalues}
\mu_\pm^\lambda &\defeq \pm e^{\pm i\arcsin\tilde\lambda}.
\end{align}

Reference \cite{Childs08} describes simulations of $H$ based on the quantum walk $V$.  One approach is to use phase estimation to (coherently) determine the value of $\tilde\lambda$.
Introducing a phase of $\exp({-i\lambda t}) = \exp({-i\tilde \lambda t\Hi/\laz})$ for each eigenvector $\ket\lambda$ simulates evolution under $H$ for time $t$.
More specifically, a general initial state has the form
\begin{equation}
\ket{\psib} = \sum_{\lambda,k} \psib_{\lambda,k}\ket{\lambda,k},
\end{equation}
where the index $k$ accounts for degenerate eigenvalues of $H$. Applying the operation $T$ yields
\begin{align}
T\ket{\psib} &= \sum_{\lambda,k}\psib_{\lambda,k}T\ket{\lambda,k} \nn
& \begin{aligned}
= \sum_{\lambda,k}\frac{\psib_{\lambda,k}}{\sqrt{2(1-\tilde\lambda^2)}}[(1-\tilde\lambda e^{-i\arccos\tilde\lambda})&\ket{\mu_+^\lambda,k}\\[-12pt]
+(1-\tilde\lambda e^{i\arccos\tilde\lambda})&\ket{\mu_-^\lambda,k}],
\end{aligned}
\end{align}
where the $\ket{\mu_\pm^\lambda,k}$ are eigenvectors of $V$, with the index $k$ labeling an orthonormal basis for each eigenspace. Applying the correct phase factor for each eigenspace gives
\begin{equation}
T e^{-iHt}\ket{\psib} = \sum_{\lambda,k}e^{-i\lambda t}\psib_{\lambda,k}T\ket{\lambda,k}.
\end{equation}
Applying $T^\dagger$ then gives $e^{-iHt}\ket{\psib}$, the desired time-evolved state.

Phase estimation can provide an estimate of $\mu_\pm^\lambda$ with variance approximately $(\pi/d)^2$ using $d$ applications of $V$. The variance in $\lambda$ is then $O(\Hi^2/\laz^2 d^2)$, which translates to an error in the state of $O(\Hi t/\laz d)$. If the allowed error is $\error$, then the simulation can be achieved with $d=O(\Hi t/\error)$ by taking $\laz=1$. (Note that the $\error$ used here is the square root of that used in Ref.\ \cite{Childs08}, because that paper considered a lower bound on the fidelity of $1-\error$, whereas we take $\error$ to be an upper bound on the trace distance.)

\section{Sparse Hamiltonian simulation}
\label{sec:spar}

The simulation scheme presented in Ref.\ \cite{Childs08} quantifies the complexity in terms of the number of quantum walk steps. To simulate a black-box Hamiltonian, we require a method to perform these steps. In this section we describe a simple approach to this problem, thereby providing a Hamiltonian simulation method suitable for the sparse case.

The walk step is composed of two operators, the swap $S$ and a reflection $2TT^\dagger -\openone$. The operation $S$ is easy to implement; the difficulty lies in implementing the reflection. It is given explicitly by
\begin{equation}
\label{eq:tref}
2TT^\dagger -\openone = \sum_{j=1}^\HN \ket{j}\bra{j}\otimes(2\ket{\phia_j}\bra{\phia_j}-\openone).
\end{equation}
That is, it is a reflection about $\ket{\phia_j}$ conditional on the state $\ket{j}$ in the first register. To perform this reflection, it suffices to give a procedure for preparing $\ket{\phia_j}$ from the $\ket{0}$ state:  by performing inverse state preparation, reflecting about $\ket{0}$, and then performing state preparation, we effectively reflect about $\ket{\phia_j}$.

Black-box preparation of an $\HN$-dimensional quantum state can clearly be performed in $\HN$ queries, but this would introduce an overall multiplicative factor of $\HN$ in the complexity of the implementation. The overall query complexity of the simulation would then be $O(\HN\Hi t/\error)$. Taking advantage of sparsity reduces this to $O(\sparseness\Hi t/\error)$, but Theorem~\ref{theorem1} uses even fewer queries.

Our improved simulation uses the following insight. We modify the state $\ket{\phia_j}$ to
\begin{equation}
\label{eq:alt3a}
\ket{\phib_j} \defeq
\sqrt{\frac{\laz}{\Hi}}\sum_{k=1}^{\HN}\sqrt{H_{jk}^*}\ket{k}\ket{0}+\sqrt{1-\frac{\laz \Sj}{\Hi}}\ket{\psic_j}\ket{1},
\end{equation}
where $\ket{\psic_j}$ is some superposition of the $\ket{k}$.
That is, we append an ancilla qubit, and replace $\ket{M+1}$ with $\ket{\psic_j}\ket{1}$.
The second term of Eq.\ \eqref{eq:alt3a}, flagged by a $\ket{1}$ state in the ancilla qubit, takes the place of the $\ket{\HN+1}$ state in Eq.\ \eqref{eq:alt2}. Thus the discrete-time quantum walk takes place in $\CC^{2\HN} \otimes \CC^{2\HN}$, although it is effectively confined to a subspace of dimension $(\HN+1)^2$.

To take account of the fact that $\Hi$ may not be known exactly, we replace $\laz$ with $\unlaziness=\laz\Li/\Hi$, where $\Li$ is a known upper bound on $\Hi$. Then we can alternatively express the definition of $\ket{\phib_j}$ as
\begin{equation}
\label{eq:alt3}
\ket{\phib_j} \defeq \sqrt{\frac{\unlaziness}{\Li}}\sum_{k=1}^{\HN}\sqrt{H_{jk}^*}\ket{k}\ket{0}+\sqrt{1-\frac{\unlaziness \Sj}{\Li}}\ket{\psic_j}\ket{1}.
\end{equation}
Note that the restriction $\laz\le 1$ implies that $\unlaziness\le \Li/\Hi$. The relation between the eigenvalues $\lambda$ and $\tilde\lambda$ can be expressed in terms of $\unlaziness$ as
\begin{equation}
\lambda = \tilde\lambda \Li/\unlaziness.
\end{equation}

Provided $\unlaziness$ is sufficiently small, we can prepare this state using a constant number of queries. In particular:
\begin{lemma}
\label{lemma1}
The state $\ket{\phib_j}$ in Eq.\ \eqref{eq:alt3} can be prepared in $O(1)$ calls to the oracles $O_H$ and $O_\F$ provided
$\unlaziness \in (0,\Li/\sparseness\Lm]$, 
where $\Lm\ge\Hm$ and $\Li\ge\Hi$.
\end{lemma}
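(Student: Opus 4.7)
My plan is to build $\ket{\phib_j}$ by first preparing a uniform superposition over the $\sparseness$ row-index slots for column $j$, then calling $O_\F$ to convert these slots into actual row labels, and finally calling $O_H$ together with a single-qubit rotation controlled on the returned matrix element to imprint the correct amplitudes. The ancilla qubit appended in \eqref{eq:alt3a} is exactly what makes the state preparation easy: all ``unused'' amplitude can be dumped into the orthogonal $\ket{1}$ flag, so I never have to specify $\ket{\psic_j}$ by hand and the state I really need to engineer lives only on the $\ket{0}$ branch.

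Concretely, starting from $\ket{j}$ together with three ancilla registers (a row register, a matrix-element register, and a single qubit, all initially $\ket{0}$), I would (i) prepare $\frac{1}{\sqrt{\sparseness}}\sum_{k=1}^{\sparseness}\ket{k}$ on the row register using no queries, (ii) call $O_\F$ once to produce $\frac{1}{\sqrt{\sparseness}}\sum_{k}\ket{j}\ket{f(j,k)}\ket{0}\ket{0}$, (iii) call $O_H$ once to write $H_{j,f(j,k)}$ into the matrix-element register, (iv) apply a single-qubit unitary $U_h$ on the ancilla, controlled on the value $h$ in the matrix-element register, acting as $U_h\ket{0}=\alpha(h)\ket{0}+\beta(h)\ket{1}$ with $\alpha(h)\defeq\sqrt{\sparseness\unlaziness h^*/\Li}$ (branch chosen so that $\alpha(0)=0$) and $\beta(h)\defeq\sqrt{1-\sparseness\unlaziness|h|/\Li}$, and (v) call $O_H$ a second time to uncompute the matrix-element register.

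For step (iv) to define a valid unitary for every matrix element I need $|\alpha(h)|^2\le 1$, i.e.\ $\sparseness\unlaziness|h|\le\Li$. Since $|H_{jk}|\le\Hm\le\Lm$, the hypothesis $\unlaziness\le\Li/\sparseness\Lm$ is exactly what makes $U_h$ well defined; this is the only substantive use of the hypothesis, and in my view the single point that is not pure bookkeeping. The complex square root in $\alpha$, needed to reproduce the phase of $\sqrt{H_{jk}^*}$, forces $U_h$ to be a genuine element of $U(2)$ rather than a real rotation, which I would realise as a real rotation by $2\arcsin\sqrt{\sparseness\unlaziness|h|/\Li}$ preceded by a relative phase $e^{-i\arg h/2}$ on $\ket{0}$.

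After step (v), the state equals $\frac{1}{\sqrt{\sparseness}}\sum_k\ket{j}\ket{f(j,k)}\ket{0}\bigl(\alpha(H_{j,f(j,k)})\ket{0}+\beta(H_{j,f(j,k)})\ket{1}\bigr)$. Because $O_\F$ is an isometry, $f(j,\cdot)$ is injective on $\{1,\ldots,\sparseness\}$; the terms with $H_{j,f(j,k)}=0$ contribute nothing to the $\ket{0}$-branch since $\alpha(0)=0$, so the $\ket{0}$-branch assembles into $\sqrt{\unlaziness/\Li}\sum_{k'=1}^{\HN}\sqrt{H_{jk'}^*}\ket{k'}$. A direct check that $|\alpha|^2+|\beta|^2=1$ shows that the residual $\ket{1}$-branch has squared norm $1-\unlaziness\Sj/\Li$, so it can be written as $\sqrt{1-\unlaziness\Sj/\Li}\,\ket{\psic_j}\ket{1}$ for some unit vector $\ket{\psic_j}$ in the span of the $\ket{k}$, matching \eqref{eq:alt3}. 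The total cost is one call to $O_\F$ and two calls to $O_H$, i.e.\ $O(1)$ queries overall.
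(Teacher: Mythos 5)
Your proposal is correct and follows essentially the same route as the paper's proof: a uniform superposition over the $\sparseness$ slots, one in-place call to $O_\F$, one call to $O_H$ to load $H_{jk}$, a controlled rotation (with phase) sending the excess amplitude to the $\ket{1}$-flagged branch, and a second call to $O_H$ to uncompute, with the hypothesis $\unlaziness\le\Li/\sparseness\Lm$ used exactly where you use it (equivalently, the paper's condition $X=\Li/\unlaziness\sparseness\ge\Lm$ ensuring the $\ket{0}$-amplitude has magnitude at most $1$). Your explicit handling of the complex phase and of the zero-element slots merely spells out details the paper leaves implicit.
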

\begin{proof}
First, prepare an equal superposition over $\ket 1$ to $\ket\sparseness$ in the first register and initiali{\s}e the ancilla qubit to $\ket{0}$, giving
\begin{equation}
\frac{1}{\sqrt{\sparseness}} \sum_{k=1}^{\sparseness} \ket{k}\ket{0}.
\end{equation}
Querying the black box $O_\F$ changes this to
\begin{equation}
\ket{\phib_j^a} \defeq \frac{1}{\sqrt{\sparseness}} \sum_{k\in \F_j} \ket{k}\ket{0},
\end{equation}
where $\F_j$ is the set of indices given by $O_\F$ on input $j$.

Next we transform $\ket{\phib_j^a}$ to
\begin{align}
\label{eq:xstate}
\frac{1}{\sqrt{\sparseness}}\sum_{k\in \F_j}
\ket{k}\left[\sqrt{\frac{H_{jk}^*}{X}}\ket{0}  +\sqrt{1-\frac{|H_{jk}|}{X}}\ket{1}\right]
\end{align}
where $X = \Li/\unlaziness\sparseness$. The most important requirement on $X$ is that $X\ge \Lm$, so $X\ge \Hm$ and the amplitude for the $\ket{0}$ state has magnitude at most $1$. Because of the requirement that $\unlaziness \le \Li/\sparseness\Lm$, taking $X = \Li/\unlaziness\sparseness$ ensures that $X\ge \Lm$. In addition, for this value of $X$, Eq.\ \eqref{eq:xstate} has the form of $\ket{\phib_j}$ for some choice of $\ket{\psic_j}$.

The state $\ket{\phib_j^a}$ can be transformed to Eq.\ \eqref{eq:xstate} by computing $H_{jk}$ in an ancilla register with the black box $O_H$ and using this value to perform a controlled rotation on the qubit.  Applying $O_H$ again uncomputes the ancilla storing $H_{jk}$.  Note that the rotation can be performed with error at most $\error$ using $\poly(\log\frac{1}{\error})$ operations \cite{SoloKit1,SoloKit2}, but this factor is not included in the analysis since we focus on the query complexity.
\end{proof}

Next, we improve the Hamiltonian simulation method reviewed in Sec.\ \ref{sec:hamsim} by combining a lazy quantum walk with phase estimation.

\begin{lemma}
\label{lemma2}
Let $\Li t/\unlaziness \in \mathbb{Z}$ and $\unlaziness\in(0,1]$.
Evolution under $H$ for time $t$ can be simulated using $O(\Li t/\unlaziness)$ steps of the quantum walk defined by the states $\ket{\phib_j}$ from Eq.\ \eqref{eq:alt3}, with error $O(\La^2\unlaziness^2/\Li^2)$.
\end{lemma}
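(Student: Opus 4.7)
The plan is to simulate $e^{-iHt}$ using the phase-estimation approach of Sec.~\ref{sec:hamsim}, applied to the lazy quantum walk $V$ built from the enlarged states $\ket{\phib_j}$ of Eq.~\eqref{eq:alt3}. First I verify that the spectral structure of $V$ is preserved: because the $\ket{1}$-flagged ancilla component of $\ket{\phib_j}$ maps under $S$ outside the image of $T$, the identity $T^\dagger ST = \unlaziness H/\Li$ continues to hold, so $V$ retains eigenvalues $\mu_\pm^\lambda = \pm e^{\pm i\arcsin\tilde\lambda}$ with $\tilde\lambda = \unlaziness\lambda/\Li$. In particular $|\tilde\lambda|\le\unlaziness\La/\Li$, so the walk eigenphases cluster in a narrow window of width $O(\unlaziness\La/\Li)$ around $0$ and $\pi$.

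The algorithm is then: (i) apply $T$ to $\ket{\psib}$; (ii) phase-estimate $V$ using $d = \Theta(\Li t/\unlaziness)$ controlled applications of $V$, coherently storing an estimate $\hat\theta$ of the walk eigenphase in an ancilla; (iii) compute $\hat\lambda = (\Li/\unlaziness)\sin\hat\theta$ (the same formula accommodates both $\mu_+$ and $\mu_-$ branches since $\sin(\pi - \hat\theta) = \sin\hat\theta$) and apply the phase $e^{-i\hat\lambda t}$; (iv) invert the phase estimation, then apply $T^\dagger$. By Lemma~\ref{lemma1}, each walk step costs $O(1)$ queries, so the total walk-step count is $O(d) = O(\Li t/\unlaziness)$ as claimed.

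The main obstacle is the error analysis. The baseline bound $O(\Li t/(\unlaziness d))$ from Sec.~\ref{sec:hamsim} yields only $O(1)$ error at this walk count, which falls far short of the claimed $O(\La^2\unlaziness^2/\Li^2)$. The sharper bound should follow from a careful Taylor expansion: writing the phase-estimation outcome $k$ as $d\theta/(2\pi) + \delta$ and using $d\sin\theta = d\tilde\lambda = \lambda t$ (which is exact at $d = \Li t/\unlaziness$), one obtains
\[
e^{-i\hat\lambda_k t} \approx e^{-i\lambda t}\,e^{-2\pi i\delta\cos\theta}\,\bigl(1 + O(\delta^2\sin\theta/d)\bigr).
\]
After uncomputing the phase estimation, the linear factor $e^{-2\pi i\delta\cos\theta}$ is reabsorbed by the inverse Fourier transform acting on the Dirichlet-kernel amplitudes, leaving only the second-order correction proportional to $\sin\theta\approx\tilde\lambda$. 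Combined with $|\tilde\lambda|\le\unlaziness\La/\Li$, this produces the claimed trace-distance bound $O(\La^2\unlaziness^2/\Li^2)$, independent of $t$. Making this cancellation precise---in particular, controlling the slow Dirichlet-kernel tails so that the residual remains second-order in $\unlaziness\La/\Li$ rather than first-order---is where the technical work lies.
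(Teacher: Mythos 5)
There is a genuine gap, and it lies exactly where you placed the ``technical work'': the cancellation you invoke does not rescue the argument---it is the mechanism by which your circuit fails. With $d=\Li t/\unlaziness$, the phase you propose to apply is $\hat\lambda_k t=(\Li/\unlaziness)t\sin(2\pi k/d)=d\sin(2\pi k/d)=2\pi k+O(d\,(2\pi k/d)^3)$, i.e.\ a multiple of $2\pi$ up to only an $\arcsin$-type cubic correction. Consequently the amplitude of the fully uncomputed component on an eigenvector is $\sum_k |c_k|^2 e^{-i\hat\lambda_k t}\approx \sum_k|c_k|^2\approx 1$: the composite ``estimate, apply $e^{-i\hat\lambda t}$, uncompute'' implements approximately the \emph{identity}, not $e^{-i\lambda t}$, so the error is $\Theta(1)$ for generic $\lambda t$. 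The linear factor $e^{-2\pi i\delta\cos\theta}$ that you hope is harmlessly reabsorbed carries essentially all of the phase $e^{-i\lambda t}$ you wanted to keep: since $\lambda t = d\tilde\lambda = d\sin\theta$ and $d$ applications of $V$ determine $\theta$ only to $\sim 2\pi/d$, the quantity $\lambda t$ is undetermined modulo an amount $\sim 2\pi\cos\theta=\Theta(1)$, so \emph{no} estimator built from this phase-estimation data can apply the phase with subconstant error. This is precisely why the baseline reviewed in Sec.~\ref{sec:hamsim} needs $d=O(\Li t/\unlaziness\error)$ rather than $O(\Li t/\unlaziness)$.

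The paper's proof of Lemma~\ref{lemma2} is structurally different and avoids trying to learn $\lambda t$ at all. It uses phase estimation with a single bit of precision only to determine (coherently) the branch sign of $\mu_\pm^\lambda$, incurring error $O(\tilde\lambda^2)$ from misidentification; it then applies $V^d$ or $(V^\dagger)^d$ \emph{directly}, so the walk itself imprints the phase $e^{\mp i d\arcsin\tilde\lambda}$. Because $d\tilde\lambda=\lambda t$ exactly (this is where the integrality of $\Li t/\unlaziness$ enters), the only deficit is the nonlinearity $d(\arcsin\tilde\lambda-\tilde\lambda)=O(d\tilde\lambda^3)$. That small residual is then corrected using an estimate of $\arcsin\tilde\lambda$ with standard deviation $O(1/d)$: if $\tilde\lambda d<1$ no correction is applied and $d\tilde\lambda^3\le\tilde\lambda^2$, while if $\tilde\lambda d\ge 1$ the corrected residual has error $O(\tilde\lambda^2/d\cdot d)=O(\tilde\lambda^2)$. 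Either way the error is $O(\tilde\lambda^2)=O(\La^2\unlaziness^2/\Li^2)$ with $O(d)=O(\Li t/\unlaziness)$ walk steps. To repair your write-up you need this ingredient---generate the dominant phase by actually running the walk $d$ times (conditioned on the estimated sign), and reserve phase estimation for the one-bit sign decision and the $O(\tilde\lambda^3)$-size $\arcsin$ correction---rather than attempting to synthesize $e^{-i\lambda t}$ from an eigenphase estimate of insufficient precision.
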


\begin{proof}
Given an arbitrary input state, first we (coherently) determine the sign $\pm$ of $\mu_\pm^\lambda$ (recall Eq.~\eqref{eq:veigenvalues}). The phase of $\mu_+^\lambda$ is $\arcsin\tilde\lambda$, whereas the phase of $\mu_-^\lambda$ is $\pi-\arcsin\tilde\lambda$. Performing phase estimation with one bit of precision on $V$ gives probabilities of measuring $+$ or $-$, given that the eigenvalue is $\mu_+^\lambda$ or $\mu_-^\lambda$, of
\begin{align}
\Pr(+|\pm)&=\frac{1\pm \sqrt{1-\tilde\lambda^2}}{2}, \\
\Pr(-|\pm)&=\frac{1\mp \sqrt{1-\tilde\lambda^2}}{2}.
\end{align}
The probability of error is therefore $O(\tilde\lambda^2)$. Since $\tilde\lambda=\unlaziness\lambda/\Li$ and $\lambda \le \La$, the error due to misidentification of the sign is $O(\La^2\unlaziness^2/\Li^2)$.

Having estimated the sign, we can apply a lazy quantum walk more accurately than in Ref.\ \cite{Childs08}. If the sign of $\mu_\pm^\lambda$ is $-$, we apply $V^d$, where
$d = \Li t/\unlaziness$ is an integer (chosen so that $\lambda t = \tilde\lambda d$), giving a phase factor of
\begin{equation}
(\mu_-^\lambda)^d
= (-e^{-i\arcsin \tilde\lambda})^d
= (-1)^d e^{-i\lambda t} + O(d \tilde\lambda^3).
\end{equation}
(The sign can be corrected if $d$ is odd.)
Similarly, if the sign of $\mu_\pm^\lambda$ is $+$, we apply $(V^\dagger)^d$, giving a phase factor
\begin{equation}
[(\mu_+^\lambda)^*]^d 
= (e^{-i\arcsin \tilde\lambda})^d
= e^{-i\lambda t} + O(d \tilde\lambda^3).
\end{equation}
In either case, the error in the final state is
\begin{equation}
O(d \tilde\lambda^3) =
O(\lambda^3 t (\unlaziness/\Li)^2) \le O(\La^3 t (\unlaziness/\Li)^2),
\end{equation}
considerably less than that for the method given in Ref.\ \cite[Theorem 2]{Childs08} for small $\unlaziness$.

We further reduce the error by using an estimate of $\tilde\lambda$ to correct the lazy quantum walk. After the lazy quantum walk implements the phase factor $e^{-id\arcsin\tilde\lambda}$, the estimate of $\tilde\lambda$ is used to correct the difference between $\tilde\lambda$ and $\arcsin\tilde\lambda$.
With $d$ applications of $V$, we obtain an estimate of $\arcsin\tilde\lambda$ with standard deviation $O(1/d)$ (see for example \cite[Theorem 5]{Childs08}).
Since $\tilde\lambda-\arcsin\tilde\lambda = O(\tilde\lambda^3)$, this estimate only improves the accuracy if $1/d$ is small compared to $\tilde\lambda$.  If $\tilde\lambda d < 1$, we do not perform a correction.  In that case the error is $O(d \tilde\lambda^3) \le O(\tilde\lambda^2)$.
On the other hand, if $\tilde\lambda d \ge 1$, then the standard deviation in the estimate of $\tilde\lambda-\arcsin\tilde\lambda$ is $O(\tilde\lambda^2/d)$, and again the error in the final phase is $O(\tilde\lambda^2)$.
So in both cases, the error in the final state is $O(\tilde\lambda^2) \le O(\La^2\unlaziness^2/\Li^2)$. The overall number of steps of the quantum walk used is $O(d) = O(\Li t/\unlaziness)$.
\end{proof}

Combining the results of Lemmas \ref{lemma1} and \ref{lemma2} gives the improved Hamiltonian simulation method described by Theorem \ref{theorem1}.

\begin{proof}[Proof of Theorem \ref{theorem1}.]
We apply the Hamiltonian simulation described in Lemma \ref{lemma2}, with the method described in Lemma \ref{lemma1} to perform the steps of the quantum walk.

Using
\begin{equation}
\label{eq:xval}
X= \frac 1{\sparseness t}\max \left\{ \lceil {\La t}/{\sqrt\delta} \rceil,\left\lceil \Lm\sparseness t \right\rceil \right\},
\end{equation}
we take $\unlaziness=\Li/\sparseness X$. This value of $X$ satisfies $X\ge\Lm$ (as we always require for $X$), which implies that the condition $\unlaziness \le \Li/\sparseness \Lm$ of Lemma \ref{lemma1} is satisfied. Thus, by Lemma \ref{lemma1}, the steps of the quantum walk can be performed using $O(1)$ queries.

With this value of $\unlaziness$, $\Li t/\unlaziness=X\sparseness t$ is an integer, and we can use Lemma \ref{lemma2}. Then the error is
\begin{align}
O(\La^2\unlaziness^2/\Li^2) &= O(\La^2/\sparseness^2 X^2) \le O(\error).
\end{align}
The total number of oracle calls is $O(\Li t/\unlaziness)=O(X\sparseness t)$, and is therefore
\begin{equation}
O(\lceil {\La t}/{\sqrt\delta} \rceil + \left\lceil \Lm\sparseness t \right\rceil) =
O({\La t}/{\sqrt\delta} + \Lm\sparseness t + 1)
\end{equation}
as claimed.
\end{proof}
Comparing this to the simulation of sparse Hamiltonians using high-order integrators \cite{Berry07}, the scaling is better in terms of all parameters except $\delta$. (We assume that the upper bounds on norms of $H$ have the same order as the norms themselves, so for example $\La=O(\Ha)$.) The number of queries is only linear in $\Ha t$, as opposed to slightly superlinear. The scaling is particularly improved in terms of $\sparseness$, as it is only linear, whereas the scaling in Ref.\ \cite{Berry07} was as $\sparseness^4$. The scaling in $\error$ is as $1/\sqrt\error$, as opposed to an arbitrarily small power in Ref.\ \cite{Berry07}. However, there is an advantage in that for $\delta = O(1/\sparseness^2)$ there is no further explicit dependence on $D$.

This also improves over the method of Ref.\ \cite{Childs08}, which uses $O(\Hi t/\delta)$ steps of the quantum walk. Using a naive method for state preparation---simply querying all nonzero elements---uses $O(\sparseness\Hi t/\delta)$ queries. Theorem \ref{theorem1} improves on this as $\Hi$ is typically larger than both $\Hm$ and $\Ha$, and because the scaling with $\error$ is improved. In particular, the bound $\Hi\le \sparseness\Hm$ shows that $\sparseness\Hi t/\delta \le O(\sparseness^2\Hm t/\error)$, which is worse than $\sparseness \Hm t$, and the bound $\Hi\le \sqrt{\sparseness}\Ha$ shows that $\sparseness\Hi t/\delta \le O(\sparseness^{3/2}\Ha t/\error)$, which is worse than $\Ha t/\sqrt\error$.

We conclude this section by describing in more detail how the isometry $T$ is used in each part of the simulation. It is used in three different ways:
\begin{enumerate}
\item At the beginning, to map the initial state from $\CC^\HN$ into the tensor product space $\CC^{2\HN} \otimes \CC^{2\HN}$.
\item To implement each application of $V$.
\item At the end, to map the final state from $\CC^{2\HN} \otimes \CC^{2\HN}$ back to $\CC^\HN$.
\end{enumerate}

For the first step, we can directly implement $T$ as defined in Eq.\ \eqref{eq:tdef}. The initial state is a superposition of states $\ket{j}$ for $j \in \{1,2,\ldots,\HN\}$ used to control the state preparation. We simply introduce another register in which $\ket{\phib_j}$ is prepared.

When $T$ is used to implement $V$, we need to specify the action on the extra qubit introduced in the state preparation procedure.  We only require the correct eigenvalue and eigenvector relations, namely, that
\begin{equation}
\label{eq:necrel}
T^\dagger ST \ket{\lambda} = \tilde\lambda\ket{\lambda}.
\end{equation}
On the expanded space, the isometry $T$ should have the form
\begin{equation}
\sum_{j=1}^{\HN} \left[ \ket{j,0}\ket{\phib_j}\bra{j,0}+\ket{j,1}\ket{\Omega_j}\bra{j,1} \right]
\end{equation}
for some states $\ket{\Omega_j}$.
Because $\ket{\lambda}$ is orthogonal to $\ket{j,1}$ (as the initial state has the qubit initiali{\s}ed as $\ket{0}$),
\begin{equation}
T\ket{\lambda} = \sum_{j=1}^{\HN} \ket{j,0}\ket{\phib_j}\braket{j,0}{\lambda},
\end{equation}
giving
\begin{align}
T^\dagger ST \ket{\lambda} &= \sum_{j,k=1}^{\HN} \big[\braket{j,0}{\phib_k}\braket{\phib_j}{k,0}\ket{j,0}\braket{k,0}{\lambda} 
\nonumber \\
&\quad +\braket{j,1}{\phib_k}\braket{\Omega_j}{k,0}\ket{j,1}\braket{k,0}{\lambda}\big].
\end{align}
To obtain Eq.\ \eqref{eq:necrel}, we simply need the second term above to vanish. This can be ensured by taking $\ket{\Omega_j}=\ket{\Omega,1}$ for any state $\ket{\Omega}$. 
Note that it is important to properly apply the operator $V$ to states where the ancilla qubit is in the state $\ket{1}$, since although the individual $T\ket{\lambda}$ have the ancilla in the $\ket{0}$ state, the eigenstates $\ket{\mu_\pm^\lambda}$ of $V$ have a component of $ST\ket{\lambda}$, and therefore have a component with the ancilla in the $\ket{1}$ state.

For the final use of $T$, we wish to map the state $\ket{j,0}\ket{\phib_j}$ to $\ket{j,0}$ for each $j \in \{1,2,\ldots,\HN\}$. In general, this can only be carried out approximately, because the final state will not be exactly a superposition of states of the form $\ket{j,0}\ket{\phib_j}$. First, if the ancilla qubit is in the state $\ket{1}$, this may be regarded as a failure, because the ideal final state has the ancilla in the state $\ket{0}$. Otherwise, we perform inverse state preparation conditional on the index $j$. Starting from $\ket{j,0}\ket{\phib_j}$, the second register should ideally be mapped to the initial state used for the state preparation; any other state can be regarded as failure. 
In general, the total failure probability is proportional to the error in the inverse state preparation procedure.

\section{Improved state preparation}
\label{sec:impl}

To further improve our simulations, especially in the non-sparse case, we consider state preparation techniques based on amplitude amplification \cite{Brassard97,Brassard,Grover98}.

Our techniques draw from work on black-box state preparation.  In that problem, we are given an oracle $O_\psi$ acting as
\begin{equation}
O_\psi \ket{j}\ket{z} = \ket{j}\ket{z \oplus \psib_j}
\end{equation}
for some quantum state $\ket\psib = \sum_{j=1}^\HN \psib_j \ket{j}$; the goal is to prepare a copy of $\ket\psib$.
Grover showed how to prepare a black-box quantum state with only $O(\sqrt{\HN})$ queries \cite{Grover00}. By the lower bound for search \cite{Grovopt}, preparation of an $\HN$-dimensional black-box quantum state requires $\Omega(\sqrt{\HN})$ queries, so this state preparation scheme is optimal.

We can use Grover's technique to prepare the states from Eq.\ \eqref{eq:alt3} and thereby implement the quantum walk.  This is favorable for large $\sparseness$, in which case state preparation based on Lemma \ref{lemma1} alone is suboptimal.
By combining the approach of Lemma \ref{lemma1} with amplitude amplification, 
we improve on both these approaches, as follows.
\begin{lemma}
\label{lemma4}
Let $\Li\ge\Hi$, $\Lm\ge\Hm$, and $\unlaziness\in(0,1]$.
Then
\begin{equation}
\label{eq:alt4}
\ket{\phib_j} = \sqrt{\frac{\unlaziness}{\Li}}\sum_{k=1}^{\HN}\sqrt{H_{jk}^*}\ket{k}\ket{0}+\sqrt{1-\frac{\unlaziness\Sj}{\Li}}\ket{\psic_j}\ket{1}
\end{equation}
can be approximately prepared using
\begin{equation}
\label{eq:calls}
O\left(\sqrt{\frac{\unlaziness \Lm \sparseness}{\Li}}+1\right)
\end{equation}
queries to $O_H$ and $O_\F$.
The approximation has relative error in the weighting of the first term of $O(\unlaziness)$.
\end{lemma}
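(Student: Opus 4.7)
The plan is to combine the one-shot construction of Lemma \ref{lemma1} with amplitude amplification \cite{Brassard97,Brassard,Grover98}. When $\unlaziness \le \Li/(\sparseness\Lm)$, Lemma \ref{lemma1} applies directly in $O(1)$ queries, accounting for the ``$+1$'' in \eqref{eq:calls}, so I focus on the complementary regime. In it, I pick the smallest odd integer $2m+1 \ge \sqrt{\unlaziness\Lm\sparseness/\Li}$ and set $X = (2m+1)^2\Li/(\unlaziness\sparseness)$, which automatically satisfies $X\ge\Lm\ge\Hm$. Repeating the recipe of Eq.~\eqref{eq:xstate} with this $X$ yields in $O(1)$ queries to $O_H$ and $O_\F$ a \emph{seed state} with the correct relative amplitudes $\sqrt{H_{jk}^*}$ on the $\ket{k}\ket{0}$ sector but total $\ket{0}$-ancilla amplitude of only $\sqrt{\Sj/(\sparseness X)}$.

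I would then apply amplitude amplification with the $\ket{0}$ ancilla flagging the good subspace. Each Grover iteration composes a reflection about $\ket{0}$ on the ancilla with a reflection about the seed (via preparation$^\dagger$, reflect-about-the-initial-zero-state, preparation), at $O(1)$ queries apiece. After $m$ iterations the good amplitude is $\sin((2m+1)\theta_0)$ with $\sin\theta_0 = \sqrt{\Sj/(\sparseness X)}$. The key structural observation is that $(2m+1)\sin\theta_0 = \sqrt{\unlaziness\Sj/\Li}$ holds \emph{exactly} and, crucially, is independent of $\Sj$: this is the point of tuning $X$ so the amplification ratio lands on an integer. Thus a single $m$, chosen in advance, works uniformly for every $j$, the total cost is $O(m+1) = O(\sqrt{\unlaziness\Lm\sparseness/\Li}+1)$ matching \eqref{eq:calls}, and because amplitude amplification acts within the two-dimensional span of the good and bad components of the seed, the internal $\sqrt{H_{jk}^*}$ structure is preserved. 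The complementary vector absorbs into the state $\ket{\psic_j}\ket{1}$ of Eq.~\eqref{eq:alt4}.

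The step I expect to require the most care is controlling the $O(\unlaziness)$ relative error on the first-term weighting. The two contributions are the cubic corrections in $\arcsin x$, substituted once at $\theta_0$ (giving $\theta_0 = \sqrt{\Sj/(\sparseness X)}(1+O(\Sj/(\sparseness X)))$) and once at $(2m+1)\theta_0$ (giving $\sin((2m+1)\theta_0) = (2m+1)\theta_0 + O(((2m+1)\theta_0)^3)$), so after squaring the relative discrepancy between $\sin^2((2m+1)\theta_0)$ and $\unlaziness\Sj/\Li$ is $O(\Sj/(\sparseness X)) + O(\unlaziness\Sj/\Li)$. The first bound is $O(\unlaziness/(2m+1)^2) \le O(\unlaziness)$ by our choice of $X$, and the second is $O(\unlaziness)$ since $\Sj \le \Li$, giving the claimed $O(\unlaziness)$ relative error.
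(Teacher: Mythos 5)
Your proposal is correct and follows essentially the same route as the paper: prepare the same $O(1)$-query seed state, amplify with reflections about the ancilla $\ket{0}$ state and the seed, fix the iteration count in advance with $X=(2r+1)^2\Li/(\unlaziness\sparseness)$ so that one choice works uniformly in $j$ despite $\Sj$ being unknown, and bound the weighting error by linearizing $\arcsin$/$\sin$, giving relative error $O(\unlaziness)$. The paper packages the last step as a bound on $|r-r_j^{\rm opt}|/r$ via Taylor's theorem, but this is the same linearization argument you give.
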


\begin{proof}
As in the proof of Lemma \ref{lemma1}, we can prepare
\begin{align}
\label{eq:first}
\ket{\phib^b_j} &\defeq \frac{1}{\sqrt{\sparseness}}\sum_{k\in \F_j}
\ket{k}\left[\sqrt{\frac{H_{jk}^*}{X}}\ket{0}  +\sqrt{1-\frac{|H_{jk}|}{X}}\ket{1}\right]
\end{align}
using one query to $O_\F$ and two queries to $O_H$, where $X$ is a real number satisfying $X\ge\Lm\ge\Hm$. Let $B_j$ denote a unitary operation that prepares $\ket{\phib^b_j}$ from $\ket{0}\ket{0}$.

We now use a form of amplitude amplification similar to that introduced by Grover \cite{Grover00}. We define two reflection operators. The first reflects about the $\ket{0}$ state for the ancilla qubit,
\begin{equation}
R^f \defeq \openone \otimes (\openone - 2\ket{0}\bra{0}),
\end{equation}
and the second reflects about the state $\ket{\phib^b_j}$,
\begin{equation}
R^b_j \defeq 2\ket{\phib_j^b}\bra{\phib_j^b} - \openone.
\end{equation}
The latter reflection can be performed by applying $B_j^\dagger$, reflecting about $\ket{0}\ket{0}$, and then applying $B_j$. Using an appropriate number of these reflections, we could obtain a final state close to
\begin{equation}
\label{eq:phijf}
\ket{\phib^f_j} \defeq \frac{1}{\sqrt{\Sj}}\sum_{k=1}^{\HN}\sqrt{H_{jk}^*}\ket{k}\ket{0}.
\end{equation}
However, the key point is that we do not rotate all the way towards this state, but instead prepare 
\begin{equation}
\label{eq:phij}
\ket{\phib_j} = \sqrt{\frac{\unlaziness \Sj}{\Li}} \ket{\phib^f_j} + {\cal N}_j \sum_{k\in \F_j}\sqrt{1-\frac{|H_{jk}|}{X_j}}\ket{k}\ket{1},
\end{equation}
where ${\cal N}_j$ is a normali{\s}ation constant. This expression corresponds to the definition of $\ket{\phib_j}$ with
\begin{equation}
\ket{\psic_j} \propto \sum_{k\in \F_j}\sqrt{1-\frac{|H_{jk}|}{X_j}}\ket{k}.
\end{equation}
According to Eq.\ \eqref{eq:normalizedh}, the normali{\s}ed Hamiltonian corresponding to the discrete-time quantum walk defined by these states is
\begin{equation}
\label{eq:h0def}
\tilde H=\frac{\unlaziness H}{\Li}.
\end{equation}

We prepare a state close to $\ket{\phib_j}$ using amplitude amplification.  Let
\begin{equation}
\ket{\phib_j(r)} \defeq (R_j^b R^f)^r \ket{\phib_j^b}
\end{equation}
denote the state as a function of the number of steps, $r$.  We have
\begin{align}
\label{eq:phir}
&\ket{\phib_j(r)} = \sin[(2r+1)\theta_j] \ket{\phib^f_j} \nonumber \\
& \qquad + {\cal N}_j \cos[(2r+1)\theta_j]\sum_{k\in \F_j}\sqrt{1-\frac{|H_{jk}|}{X}}\ket{k}\ket{1},
\end{align}
where
\begin{equation}
\sin\theta_j = \braket{\phib^f_j}{\phib^b_j} = \sqrt{\frac{\Sj}{\sparseness X}}.
\end{equation}
By Eqs.\ \eqref{eq:phijf} and \eqref{eq:phij}, $\braket{\phib^f_j}{\phib_j}=\sqrt{\unlaziness \Sj/\Li}$, so the value of $r$ that gives the desired outcome is
\begin{align}
\label{eq:rjopt}
r_j^{\rm opt} &\defeq \frac{1}{2} \left( \frac{1}{\theta_j} \arcsin\sqrt{\frac{\unlaziness \Sj}{\Li}}-1\right) \nonumber\\
& =\frac{1}{2} \left( \frac{\arcsin\sqrt{\frac{\unlaziness \Sj}{\Li}}}{\arcsin\sqrt{\frac{\Sj}{\sparseness X}}} -1\right).
\end{align}

There are several reasons why we cannot perform exactly $r_j^{\rm opt}$ queries.  This value may not be an integer, and it is $j$-dependent.  Furthermore, since $\Sj$ is not known in general, the exact value of $r_j^{\rm opt}$ is unknown.
However, if $\unlaziness$ is small, then the $\arcsin$ function can be lineari{\s}ed, and we can take
\begin{equation}
r \approx \frac{1}{2} \sqrt{\frac{\unlaziness X \sparseness}{\Li}}-\frac{1}{2}.
\end{equation}
Specifically, we choose 
\begin{equation}
r = \left\lceil \frac{1}{2} \sqrt{\frac{\unlaziness \Lm \sparseness}{\Li}}-\frac{1}{2} \right\rceil
\end{equation}
and
\begin{equation}
\label{eq:xval2}
X=(2r+1)^2 \frac{\Li}{\unlaziness\sparseness}.
\end{equation}
Since the number of queries per step is $O(1)$, the total number of queries is $O(\sqrt{\unlaziness\Lm\sparseness/\Li}+1)$ as claimed.

Now we analy{\s}e the error incurred due to imperfect state preparation.  First consider the deviation of $r$ from $r_j^{\rm opt}$.  This deviation results from lineari{\s}ation of both the numerator and denominator of Eq.\ \eqref{eq:rjopt}. The argument of the $\arcsin$ function in the denominator is smaller than that in the numerator,
so to determine the scaling of the error, it suffices to consider the error in the lineari{\s}ation of the numerator. The relative error is thus
\begin{equation}
\frac{|r_j^{\rm opt} - r|}r = O(\unlaziness\sigma_j/\Li) \le O(\unlaziness)
\end{equation}
since $\Sj \le \Li$ for all $j$.

The effect of the difference between $r$ and $r_j^{\rm opt}$ is a slightly incorrect weighting of $\ket{\phib_j^f}$ in the final state:
\begin{align}
\braket{\phib_j^f}{\phib_j(r)} 
&= \sin[(2r+1)\theta_j] \nn
&= \sqrt{\frac{\unlaziness\sigma_j}{\Li}}(1+x_j)
\end{align}
where
\begin{align}
x_j
&\defeq \sqrt{\frac{\Li}{\unlaziness\sigma_j}} \sin[(2r+1)\theta_j] - 1 \nn
&= \sqrt{\frac{\Li}{\unlaziness\sigma_j}} \left\{\sin[(2r_j^{\rm opt}+1)\theta_j]\right. \nn & \quad \left. +2\theta_j\cos[(2r_j^{\rm int}+1)\theta_j](r-r_j^{\rm opt})\right\}-1 \nn
&= \sqrt{\frac{\Li}{\unlaziness\sigma_j}}2\theta_j\cos[(2r_j^{\rm int}+1)\theta_j](r-r_j^{\rm opt})
\end{align}
for some $r_j^{\rm int} \in [r,r_j^{\rm opt}]$, where in the second line we have used Taylor's theorem. Hence
\begin{align}
|x_j| &\le \sqrt{\frac{\Li}{\unlaziness\sigma_j}}2\theta_j|r-r_j^{\rm opt}| \nn
&\le \sqrt{\frac{\Li}{\unlaziness\sigma_j}}\pi\sqrt{\frac{\sigma_j}{DX}}|r-r_j^{\rm opt}| \nn
&= \frac {\pi r}{2r+1}\frac{|r-r_j^{\rm opt}|}r \nn
&= O(\unlaziness).
\end{align}
In the next to last line, we have used Eq.\ \eqref{eq:xval2}. Hence the error in the weighting of the first term in Eq.\ \eqref{eq:alt4} is $O(\unlaziness)$, as claimed.
\end{proof}

The state preparation scheme described in Lemma \ref{lemma4} introduces additional error in the Hamiltonian simulation, but this error is well bounded. In particular, we have the following.
\begin{lemma}
\label{lemma5}
The error in the state preparation scheme of Lemma \ref{lemma4} results in an error in the Hamiltonian simulation described in Sec.\ \ref{sec:hamsim} of $O(\Ha t\unlaziness)$.
\end{lemma}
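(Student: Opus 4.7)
The plan is to show that the imperfect state preparation effectively implements evolution under a perturbed Hamiltonian $H'$ with $\|H'-H\|=O(\Ha\unlaziness)$; the trace-distance bound on the simulation error then follows from the standard estimate $\|e^{-iHt}-e^{-iH't}\|\le\|H-H'\|t$, a consequence of the Duhamel/telescoping identity.

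First I would characterize the prepared state precisely. By Lemma~\ref{lemma4}, the amplitude of $\ket{\tilde\phib_j}$ on $\ket{k,0}$ equals $(1+x_j)\sqrt{\unlaziness H_{jk}^*/\Li}$, with a real factor $x_j$ satisfying $|x_j|=O(\unlaziness)$, while the remaining weight sits entirely in the ancilla-$\ket{1}$ subspace to preserve normalization. No further structure is introduced, because the amplitude amplification rotates within a fixed two-dimensional subspace.

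Next I would compute the effective normalized Hamiltonian $\tilde H'$ that the quantum walk built from $\tilde T\ket{j}=\ket{j,0}\ket{\tilde\phib_j}$ actually simulates. Because $S$ swaps the two $2\HN$-dimensional registers, only the ancilla-$\ket{0}$ components of $\ket{\tilde\phib_j}$ and $\ket{\tilde\phib_k}$ contribute to the overlap $\bra{j,0}\bra{\tilde\phib_j}S\ket{k,0}\ket{\tilde\phib_k}$, and the same calculation that yields $\tilde H=\unlaziness H/\Li$ now gives
\begin{equation}
\tilde H'_{jk} = (1+x_j)(1+x_k)\,\tilde H_{jk}.
\end{equation}
In matrix form, $\tilde H'=(\openone+D)\tilde H(\openone+D)$ with $D\defeq\diag(x_1,\ldots,x_{\HN})$ and $\|D\|=O(\unlaziness)$, so submultiplicativity of the spectral norm gives
\begin{equation}
\|\tilde H'-\tilde H\| \le (2\|D\|+\|D\|^2)\|\tilde H\| = O(\unlaziness\|\tilde H\|).
\end{equation}
Rescaling via $H'=\Li\tilde H'/\unlaziness$ and $\tilde H=\unlaziness H/\Li$ yields $\|H'-H\|=O(\Ha\unlaziness)$.

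Finally, since $\tilde T$ is an isometry, the simulation of Sec.~\ref{sec:hamsim} with $T$ replaced by $\tilde T$ throughout acts on the input $\ket{\psib}$ as $\tilde T^\dagger(\tilde T e^{-iH't}\tilde T^\dagger)\tilde T\ket{\psib}=e^{-iH't}\ket{\psib}$, up to the phase-estimation and lazy-walk errors already accounted for in Lemma~\ref{lemma2}. The additional error from using $H'$ in place of $H$ is bounded in trace distance by $\|H-H'\|t=O(\Ha t\unlaziness)$, which is the claim.

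The main obstacle I expect is the middle step: verifying that the effective perturbed Hamiltonian inherits the clean multiplicative structure $(1+x_j)(1+x_k)\tilde H_{jk}$. This rests on two observations—that the normalization correction lives in the ancilla-$\ket{1}$ subspace and is therefore orthogonal to the $\ket{k,0}$ projections picked out by $S$, and that $x_j$ is real (so that $\tilde H'$ remains Hermitian). Both follow from the ancilla construction in Eq.~\eqref{eq:alt3a} and the explicit formula for $x_j$ derived in the proof of Lemma~\ref{lemma4}, but they must be checked carefully: this multiplicative structure is precisely what makes the state-preparation error \emph{systematic} in the effective Hamiltonian, rather than accumulating linearly over the $O(\Li t/\unlaziness)$ walk steps (which would yield a bound much too weak to be useful).
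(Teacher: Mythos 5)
Your proposal is correct and follows essentially the same route as the paper: the paper likewise computes $\tilde H'_{jk}=\bra{j,0}\bra{\phib_j(r)}S\ket{k,0}\ket{\phib_k(r)}=\frac{\unlaziness H_{jk}}{\Li}(1+x_j)(1+x_k)$, bounds $\|\tilde H'-\tilde H\|$ via the diagonal matrix $\diag(x_1,\ldots,x_\HN)$ with $|x_j|=O(\unlaziness)$, and multiplies by the effective evolution time $t\Li/\unlaziness$, which is arithmetically identical to your rescaling $H'=\Li\tilde H'/\unlaziness$ followed by $\|H-H'\|t$. Your added remarks on the reality of $x_j$ and the orthogonality of the ancilla-$\ket{1}$ component are correct and consistent with the paper's construction.
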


\begin{proof}
The actual Hamiltonian being simulated, $\tilde H'$, has matrix elements
\begin{align}
\tilde H'_{jk}
&= \bra{j,0}\bra{\phib_j(r)}S\ket{k,0}\ket{\phib_k(r)} \nn
&= \braket{\phib_j(r)}{k,0}\braket{j,0}{\phib_k(r)} \nn
& = \sin[(2r+1)\theta_j]\sin[(2r+1)\theta_k]\frac{H_{jk}}{\sqrt{\sigma_j\sigma_k}} \nn
& = \frac{\unlaziness H_{jk}}{\Li}(1+x_j)(1+x_k).
\end{align}
Defining a diagonal matrix ${\bf{x}} \defeq \diag(x_1,x_2,\ldots,x_\HN)$, the error in the Hamiltonian is
\begin{align}
\| \tilde H' - \tilde H \|
&= \left\|\frac{\unlaziness}{\Li}({\bf{x}}H+H{\bf{x}}+{\bf{x}}H{\bf{x}})\right\| \nn
&\le \frac{\unlaziness \Ha}{\Li}(2 x_{\rm max}
+x_{\rm max}^2) \nn
&= \frac{\unlaziness \Ha}{\Li}O(\unlaziness),
\end{align}
where $x_{\rm max}\defeq\max_j |{x_j}|$. In evolving the Hamiltonian over time $t$, we multiply this by a factor of $t\Li/\unlaziness$, so the resulting error is $O(\Ha t\unlaziness)$.
\end{proof}

\section{Non-sparse Hamiltonians}
\label{sec:full}

Now we examine the overall performance of the Hamiltonian simulation algorithm with improved state preparation. Multiplying the number of steps of the quantum walk by the number of queries required to implement each step, we find the following.
\begin{lemma}
\label{lemma6}
Given a black-box Hamiltonian H, let $\La\ge\Ha$, $\Li\ge\Hi$, and $\Lm\ge\Hm$.  Then $H$ can be simulated for time $t$ with error at most $\error\in(0,1]$ using
\begin{equation}
\label{eq:lem6}
O\left( t^{3/2}\sqrt{\frac{\Lm\sparseness\Li\La}{\error}}\right)
\end{equation}
queries to $O_H$ and $O_\F$, provided that
\begin{align}
\label{eq:rests}
\La t &\ge \sqrt{\delta}, \\
\label{eq:rests2}
\La t &\ge \frac{\La^2}{\Lm\Li\sparseness}, ~ \text{and} \\
\label{eq:helpful}
\La &\le \Li.
\end{align}
\end{lemma}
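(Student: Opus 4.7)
The plan is to combine the $O(\Li t/\unlaziness)$ walk-step count from Lemma~\ref{lemma2} with the $O(\sqrt{\unlaziness\Lm\sparseness/\Li}+1)$ per-step query cost from Lemma~\ref{lemma4}, choosing the laziness parameter $\unlaziness$ so that both the phase-estimation error from Lemma~\ref{lemma2} and the state-preparation error from Lemma~\ref{lemma5} fit within the budget $\delta$.

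First I would set $\unlaziness = \Theta(\delta/(\La t))$, with a small adjustment so that $\Li t/\unlaziness$ is an integer as Lemma~\ref{lemma2} requires. By condition~(\ref{eq:rests}), $\La t\ge\sqrt\delta$, so $\unlaziness\le\sqrt\delta\le 1$, verifying the $\unlaziness\in(0,1]$ hypothesis of Lemmas~\ref{lemma2} and~\ref{lemma4}. With this choice, the Lemma~\ref{lemma5} error is $O(\Ha t\unlaziness)\le O(\La t\unlaziness)=O(\delta)$, and the Lemma~\ref{lemma2} error is $O(\La^2\unlaziness^2/\Li^2)=O(\delta^2/(\Li t)^2)$, which by~(\ref{eq:helpful}) and~(\ref{eq:rests}) is at most $O(\delta^2/(\La t)^2)\le O(\delta)$. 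Thus the total simulation error stays below the allowed tolerance.

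Next I multiply the $O(\Li\La t^2/\delta)$ walk-step count by the per-step query cost $O(\sqrt{\unlaziness\Lm\sparseness/\Li}+1)$, which separates into two contributions,
\[
Q = O\!\left(t^{3/2}\sqrt{\Lm\sparseness\Li\La/\delta}\right)+O\!\left(\Li\La t^2/\delta\right).
\]
The first matches the claimed bound~(\ref{eq:lem6}). The remaining task is to absorb the additive $O(\Li\La t^2/\delta)$ into the first under condition~(\ref{eq:rests2}). Reading~(\ref{eq:rests2}) as $\La\le\Lm\Li\sparseness t$ and combining with~(\ref{eq:rests}) and~(\ref{eq:helpful}) should place us in the regime where the amplitude-amplification term $\sqrt{\unlaziness\Lm\sparseness/\Li}$ dominates the ``$+1$'' fallback in Lemma~\ref{lemma4}, so that the additive piece cannot exceed the leading term.

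The main obstacle will be this final algebraic step: the three given inequalities must interact just right to rule out the ``$+1$'' regime of Lemma~\ref{lemma4}, and minor integer-rounding factors from $\Li t/\unlaziness$ will need to be tracked carefully so they do not inflate the constants. Once this is settled, the bound~(\ref{eq:lem6}) follows immediately from the multiplication above.
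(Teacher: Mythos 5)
Your parameter choice and error analysis match the paper's ($\unlaziness=\Theta(\error/\La t)$, Lemma~\ref{lemma5} error $O(\La t\unlaziness)=O(\error)$, Lemma~\ref{lemma2} error $O(\error)$ via \eqref{eq:rests} and \eqref{eq:helpful}), and in the regime where amplitude amplification is actually used your product of step count and per-step cost gives exactly \eqref{eq:lem6}. The gap is in the final step: the additive term $O(\Li\La t^2/\error)$ coming from the ``$+1$'' in Lemma~\ref{lemma4} \emph{cannot} be absorbed into the leading term using \eqref{eq:rests}--\eqref{eq:helpful}. Comparing the two terms, $\Li\La t^2/\error \le t^{3/2}\sqrt{\Lm\sparseness\Li\La/\error}$ holds iff $\Li\La t \lesssim \error\Lm\sparseness$, which (with $\unlaziness=\Theta(\error/\La t)$) is precisely the statement $\unlaziness\Lm\sparseness/\Li \gtrsim 1$, i.e.\ that you are \emph{not} in the ``$+1$'' regime. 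So exactly when the ``$+1$'' dominates, the naive product exceeds the claimed bound, and no combination of the three hypotheses rescues it: take $\sparseness=\Lm=\Li=\La=1$, $t=10$, $\error=0.01$; all three conditions hold, the claimed bound is $\approx t^{3/2}/\sqrt\error\approx 316$, but your additive term is $t^2/\error=10^4$.

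The missing idea (and what the paper does) is a case split with a \emph{different} laziness parameter in the bad regime. When $\sqrt{\unlaziness\Lm\sparseness/\Li}<1$ one should abandon amplitude amplification entirely and run the Theorem~\ref{theorem1} simulation, which uses Lemma~\ref{lemma1}'s exact $O(1)$-query state preparation and therefore incurs no $O(\Ha t\unlaziness)$ preparation error; this permits a much larger $\unlaziness$ (hence far fewer walk steps), giving $O(\La t/\sqrt\error+\Lm\sparseness t+1)$ total queries. One then checks that this is at most \eqref{eq:lem6}: the term $\Lm\sparseness t$ is bounded using $\sqrt\error=O(\sqrt{\unlaziness\La t})$ together with $\unlaziness\Lm\sparseness=O(\Li)$ (the defining inequality of this regime), the term $\La t/\sqrt\error$ is bounded using \eqref{eq:rests2}, and the constant is handled by combining \eqref{eq:rests} and \eqref{eq:rests2}. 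Keeping your single choice $\unlaziness=\Theta(\error/\La t)$ throughout, as proposed, cannot yield the stated bound.
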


The restriction \eqref{eq:helpful} simply means that $\La$ is not unnecessarily large. Because $\Ha\le\Hi$, we can decrease any given $\La$ to be at most $\Li$, provided \eqref{eq:rests} still holds.
This Lemma provides improved performance in cases where $\sparseness$ is large.
This may mean that $\sparseness=\HN$, but we continue to perform the analysis in terms of the sparseness parameter $\sparseness$ for generality.

\begin{proof}
We take
\begin{equation}
\label{eq:unval}
\unlaziness=\frac{\Li t}{\lceil \Li\La t^2/\error\rceil}.
\end{equation}
This ensures that $\unlaziness\le \error/\La t$, so $\Ha t\unlaziness\le\error$. The restriction \eqref{eq:rests} then ensures that $\unlaziness\le 1$. In addition, \eqref{eq:rests} and \eqref{eq:helpful} ensure that $\Li\La t^2/\error\ge 1$, so the ceiling function does not affect the scaling, and
\begin{equation}
\label{eq:unscaling}
1/\unlaziness = O(\La t/\error).
\end{equation}

With this value of $\unlaziness$, $\Li t/\unlaziness$ is an integer, and therefore Lemma \ref{lemma2} shows that $O(\Li t/\unlaziness)$ quantum walk steps suffice for the simulation. Then, using Lemma \ref{lemma4}, the number of oracle queries for each step of the quantum walk is $O(\sqrt{\unlaziness\Lm\sparseness/\Li})$, unless this quantity is less than 1, in which case the state preparation proceeds without amplitude amplification.

That case does not alter the result, because the total number of queries for the simulation as given by Eq.\ \eqref{eq:other} in Theorem \ref{theorem1} is less than Eq.\ \eqref{eq:lem6} given the restrictions in Lemma \ref{lemma6}.  This can be shown as follows. First, assuming $\sqrt{\unlaziness\Lm\sparseness/\Li}=O(1)$, we have
\begin{align}
\sparseness\Lm t &= \frac{\sqrt{\error}\sparseness\Lm t}{\sqrt{\error}} \nn
&= O\left( \frac{\sqrt{\unlaziness\La t}\sparseness\Lm t}{\sqrt{\error}} \right) \nn
&\le O\left( t^{3/2}\sqrt{\frac{\Lm\sparseness\Li\La}{\error}}\right).
\end{align}
In the second line we have used Eq.\ \eqref{eq:unscaling}, and in the third line we have used the condition that $\sqrt{\unlaziness\Lm\sparseness/\Li}=O(1)$.
Next,
\begin{equation}
\frac{\La t}{\sqrt\error} 
\le t^{3/2}\sqrt{\frac{\Lm\sparseness\Li\La}{\error}}
\end{equation}
using the restriction \eqref{eq:rests2}.
Finally, combining Eqs.\ \eqref{eq:rests} and \eqref{eq:rests2} shows that the number of queries in Eq.\ \eqref{eq:lem6} is at least constant.  Thus we find that Eq.\ \eqref{eq:other} is less than Eq.\ \eqref{eq:lem6}, as required.

For the case where state preparation proceeds via amplitude amplification, we multiply the number of steps of the quantum walk (from Lemma \ref{lemma2}) by the number of oracle calls for each step (from Lemma \ref{lemma4}). Thus the total number of queries is
\begin{equation}
\label{eq:nocalls}
O\left(t\sqrt{\frac{\Lm\sparseness\Li}{\unlaziness}}\right)
\le
O\left(t^{3/2}\sqrt{\frac{\Lm\sparseness\Li\La}{\error}}\right).
\end{equation}
where we have used Eq.\ \eqref{eq:unscaling}.

Finally, we consider the error in the simulation. Because $\unlaziness\le\error/\La t$, Lemma \ref{lemma5} implies that the error due to imperfect state preparation is $O(\error)$. Using Lemma \ref{lemma2}, the error due to the quantum walk simulation is $O(\La^2\unlaziness^2/\Li^2)$. Using $\unlaziness\le \error/\La t$ and $\sqrt{\error} \le \La t$, this contribution to the error is also $O(\error)$.

The statement of the Lemma requires that the error is less than $\delta$, rather than $O(\delta)$. However, any multiplying factor for the error can be absorbed into the big-$O$ notation of Eq.\ \eqref{eq:lem6}.
\end{proof}

We are interested in improving the scaling with $\sparseness$ beyond the linear scaling in Theorem \ref{theorem1}.
The number of queries in Lemma \ref{lemma6} contains $\sqrt{\sparseness}$, but also depends on several other quantities.
For simplicity, in this discussion we assume that $\La$ can be replaced with $\Ha$, and so forth.
In the worst case we can have $\Hm\propto\Ha$ and $\Hi\propto\Ha\sqrt{\sparseness}$.
This would yield overall scaling of $O((\Ha t)^{3/2}\sparseness^{3/4}/\sqrt\error)$.
However, it should be noted that this worst case arises from two different factors.
\begin{enumerate}
\item To have $\Hm\propto\Ha$, the distribution of the magnitudes of the matrix elements should have a sharp peak, so there is a row with most of the weight on one of the elements.
\item To have $\Hi\propto\Ha\sqrt{\sparseness}$, the magnitudes of the matrix elements should be relatively evenly distributed.
\end{enumerate}
If we could ensure that all the nonzero elements had magnitudes within some constant factor (so there is no sharp peak), then we would obtain $\Hm\propto\Ha/\sqrt{\sparseness}$, giving a scaling of $O((\Ha t)^{3/2}\sqrt{\sparseness/\error})$.

\section{Breaking up the Hamiltonian}
\label{sec:break}

We now consider how the simulation can be improved by breaking up the Hamiltonian into a sum of terms.
Although the matrix elements of the Hamiltonian may differ over a wide range, the Hamiltonian can be broken up into terms, each of which has matrix elements of similar magnitude.
By combining the evolution under these Hamiltonians via a Lie-Trotter-Suzuki formula, we can expect scaling close to $O((\Ha t)^{3/2}\sqrt{\sparseness/\error})$.
The only problem is that the spectral norms of the individual Hamiltonians may be large.
First we present a derivation showing that, provided the norms of the individual terms are not large, then the expected scaling is obtained.
Next we present numerical results showing that typical spectral norms are small, although there are pathological cases with large norms. Finally, we present a general method using a number of queries roughly proportional to $\sparseness^{2/3}$ even when the spectral norms are large.

\subsection{Small norms}
\label{sec:smallnorms}

In order to present our result, we define the function ``\brk'', which quantifies how much the norm can be increased by breaking up the Hamiltonian into parts.  Let
\begin{equation}
\brk(H) := \max_{a,b\in \mathbb{R}} \| H^{ab} \| / \|H\|,
\end{equation}
where the matrix $H^{ab}$ is defined by
\begin{equation}
H^{ab}_{jk} := \begin{cases} H_{jk} & \text{if $a < |H_{jk} | \le b$}, \\
0 & \text{if $|H_{jk} | \le a {\rm ~or~} b<|H_{jk} |$}. \end{cases}
\end{equation}
In this subsection we suppose that $\brk(H)$ is small.
We present numerical evidence in Sec.\ \ref{sec:norms} that $\brk(H)\le 1.5$ in most cases.
From the definition, it is clear that $\brk(H)\ge 1$.
In addition, because $\|H^{ab}\|\le \|H^{ab}\|_1\le\|H\|_1\le\|H\|\sqrt{\sparseness}$, we have $\brk(H)\le\sqrt{\sparseness}$.
If $\brk(H)$ can be upper bounded by a constant, we obtain a simulation with scaling close to $\sqrt\sparseness$.

\begin{theorem}
\label{theoremsm}
Let $\La\ge\Ha$ and $\sca\in[\brk(H),\sqrt{\sparseness}]$.  The evolution under the Hamiltonian $H$ for time $t$ can be simulated with error at most $\error\in(0,1]$ using
\begin{equation}
\label{ex:smres}
O\left( \sqrt{\sca\sparseness/\error} (\log\sparseness)^{7/4} (\La t)^{3/2} \right)
\end{equation}
queries to $O_H$ and $O_\F$, provided $\error\sparseness > \La t > \sqrt\error$.
\end{theorem}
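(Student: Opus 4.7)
The plan is to decompose $H$ into magnitude-balanced pieces, simulate each piece via Lemma~\ref{lemma6}, and combine the simulations using a high-order Lie-Trotter-Suzuki formula, as outlined at the beginning of Section~\ref{sec:break}.

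First I would partition the nonzero entries of $H$ dyadically by magnitude: let $H^{(i)}$ contain those entries $H_{jk}$ with $\Lm 2^{-i} < |H_{jk}| \le \Lm 2^{-i+1}$, and write $\mu_i \defeq \Lm 2^{-i+1}$. Entries of magnitude at most $\error/(\sparseness t)$ contribute at most $O(\error)$ to $\|Ht\|$ and may be discarded, so the number of pieces can be capped at $m = O(\log(\La \sparseness t/\error)) = O(\log \sparseness)$, using the hypothesis $\La t < \error\sparseness$. The break assumption $\sca \ge \brk(H)$ then supplies $\|H^{(i)}\| \le \sca\La$ for each piece, while the dyadic construction yields $\|H^{(i)}\|_{\max} \le \mu_i$, $\|H^{(i)}\|_1 \le \sparseness\mu_i$, and $\sum_i \mu_i \le 2\La$ as a geometric series.

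Next I would apply a $2k$th-order Suzuki product formula to $H = \sum_i H^{(i)}$, expressing $e^{-iHt}$ as a product over $n = O((m\sca\La t)^{1+1/2k}/\error^{1/2k})$ segments, each consisting of $N_S = O(5^k m)$ short-time exponentials of single pieces. Every such exponential $e^{-iH^{(i)} c t/n}$ would be simulated using Lemma~\ref{lemma6} with error budget $O(\error/(nN_S))$. Because the entries of $H^{(i)}$ lie within a factor of $2$ of each other, the piece is ``balanced'' in the sense discussed at the end of Section~\ref{sec:full}, so Lemma~\ref{lemma6} yields the desired $\sqrt{\sparseness}$ dependence rather than the $\sparseness^{3/4}$ dependence one obtains for a generic Hamiltonian. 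Assembling the per-stage costs over all $nN_S$ stages, telescoping $\sum_i \mu_i \le 2\La$, and finally choosing $k$ of order $\log\log\sparseness$ to balance the $5^k$ Suzuki overhead against the $(1/2k)$-th-root factors should produce the claimed $(\log\sparseness)^{7/4}$ overhead.

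The main obstacle will be the careful accounting needed to extract the right powers of $\sca$ and $\sparseness$ simultaneously. A naive per-piece application of Lemma~\ref{lemma6} using only the bound $\|H^{(i)}\| \le \sca\La$ gives a factor of $\sca^{3/2}$ per stage, and the target $\sqrt{\sca}$ dependence is recovered only by exploiting the fact that the Suzuki segment count $n$ itself scales with $\sca\La t$, so that a factor of $\sca$ cancels between $n$ and the per-stage cost once everything is summed. The hypothesis $\La t > \sqrt{\error}$ is needed so that the preconditions of Lemma~\ref{lemma6} hold for each per-stage simulation, while $\error\sparseness > \La t$ keeps the number of pieces logarithmic after truncation of the smallest entries.
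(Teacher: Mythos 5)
Your overall architecture (split $H$ by magnitude, simulate each piece with Lemma~\ref{lemma6} using $\|H^{(i)}\|\le\sca\La$, recombine with a product formula, take $O(\log\sparseness)$ pieces) is the same as the paper's proof of Theorem~\ref{theoremsm}, but there is a concrete gap in your per-piece norm bounds. The only 1-norm bound you give for a piece is $\|H^{(i)}\|_1\le\sparseness\mu_i$, so Lemma~\ref{lemma6} charges each stage a factor $\sqrt{\Lm^{(i)}\sparseness\Li^{(i)}\La^{(i)}}\le\sqrt{\mu_i\cdot\sparseness\cdot\sparseness\mu_i\cdot\sca\La}=\sparseness\,\mu_i\sqrt{\sca\La}$, and summing over pieces with $\sum_i\mu_i\le 2\La$ leaves a total of order $\sparseness\sqrt{\sca}\,(\La t)^{3/2}\error^{-1/2}$ up to polylogs --- linear in $\sparseness$, not $\sqrt{\sca\sparseness}$. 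The large-magnitude pieces (those with $\mu_i$ near $\Hm$, which can be as large as $\Ha$) dominate, and neither the telescoping of $\sum_i\mu_i$ nor the truncation of tiny entries repairs this. The missing ingredient is the paper's sharper bound for a piece whose entries all exceed a cutoff $A$: since $|H_{jk}|\le|H_{jk}|^2/A$ for such entries and $\sum_k|H_{jk}|^2\le\Ha^2$, one gets $\|H^{(i)}\|_1\le\Ha^2/A$; with cutoffs in (near-)constant ratio this makes $\Lm^{(i)}\Li^{(i)}=O(\La^2)$ uniformly over pieces, so every piece costs $O(\tau^{3/2}\sqrt{\sparseness\sca}\,\La^{3/2}/\sqrt{\error_i})$ and the $O(\log\sparseness)$ pieces contribute only the polylogarithmic factor. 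Also note the sparseness fed to Lemma~\ref{lemma6} must remain $\sparseness$ for every piece (the oracle $O_\F$ does not reveal which nonzeros exceed a threshold), so you cannot instead argue that the top pieces are themselves very sparse.

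Separately, the cancellation mechanism you rely on cannot operate: once each stage gets error budget $\error/(nN_S)$, the per-stage cost scales as $(t/n)^{3/2}\sqrt{n}$ and there are $\Theta(n)$ stages per piece, so the total query count is independent of the segment count $n$; no factor of $\sca$ can cancel against it, and the $\sca$ and $\sparseness$ powers must come entirely from the per-piece norm products as above. For the same reason the high-order Suzuki formula with $k\sim\log\log\sparseness$ buys nothing here and costs an extra $5^{\Theta(k)}$ (a further power of $\log\sparseness$); the paper uses only the $K=1$ Strang splitting and obtains $(\log\sparseness)^{7/4}$ from the number of pieces $\hm\propto\log\sparseness$ together with the error allocation. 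Finally, the lower-bound constraints \eqref{eq:rests} and \eqref{eq:rests2} on the Trotter time steps require a genuine argument (the paper chooses the step size explicitly and falls back on Theorem~\ref{theorem1} in the corner case where the required step is unavailable), which your sketch leaves unaddressed; your dyadic truncation of entries below $\error/(\sparseness t)$ is a fine alternative to the paper's choice $A_\hm=\La/\sqrt{\sparseness}$, but it does not substitute for the missing 1-norm bound.
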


\begin{proof}
We split the Hamiltonian into $L$ terms, each with nonzero elements of approximately the same magnitude:
\begin{equation}
\label{eq:split}
H = \sum_{\hj=1}^{\hm} H_\hj.
\end{equation}
We take the Hamiltonians $H_\hj$ to include elements with decreasing magnitudes: $H_1$ contains elements with the largest magnitudes, $H_2$ contains elements with the next largest magnitudes, and so forth. We denote the cutoff values $A_\hj$, so $H_\hj = H^{A_\hj A_{\hj-1}}$ for $\hj < \hm$ and $H_\hm = H^{0A_{\hm-1}}$.
We take $A_0=\La$, $A_\hm=\La/\sqrt{\sparseness}$, and $A_0>A_1>\cdots>A_\hm$.
In examining $H_\hj$, let $\La^{(\hj)}$ denote an upper bound on $\|H_\hj\|$, $\Li^{(\hj)}$ an upper bound on $\|H_\hj\|_1$, $\Lm^{(\hj)}$ an upper bound on $\|H_\hj\|_{\rm max}$, and $\tau_\hj$ the time interval for simulation of $H_\hj$. We also let $\error_\hj$ denote the error allowed for simulating $H_\hj$ over a time step of length $\tau_\hj$.

Because $|[H_\hj]_{jk}|\le A_{\hj-1}$, we can take $\Lm^{(\hj)} = A_{\hj-1}$. To choose a value of $\Li^{(\hj)}$ for $\hj<\hm$, we use
\begin{align}
\|H_\hj\|_1 &\le \max_{j} \sum_{k=1}^M |[H_\hj]_{jk}|^2/A_\hj \nn
&\le \max_{j} \sum_{k=1}^M |H_{jk}|^2/A_\hj \nn
&\le \Ha^2/A_\hj .
\end{align}
Therefore we can take $\Li^{(\hj)}=\La^2/A_\hj$ for $\hj<\hm$. For $\hj=\hm$, we have 
\begin{equation}
\|H_\hj\|_1 \le \Hi \le \Ha\sqrt{\sparseness}.
\end{equation}
Since we set $A_\hm=\La/\sqrt{\sparseness}$, we have $\Li^{(\hj)}=\La^2/A_\hj$ for $\hj=\hm$ as well. This is why we define a value for $A_\hm$, even though it is not used to bound matrix elements.

The success of the simulation depends crucially on the scaling of the norms $\|H_\hj\|$. 
By assumption, $\sca\ge\brk(H)$, so $\|H_\hj\|\le \sca\Ha$.
Because $\|H_\hj\| \le \|H_\hj\|_1$, we can take $\La^{(\hj)}=\min\{\sca\La,\La^2/A_\hj\}$.

Using Lemma \ref{lemma6}, the number of queries to simulate $H_\hj$ for time $\tau_\hj$ is
\begin{equation}
\label{eq:witham}
O\left(\tau_\hj^{3/2} \sqrt{\frac{\Lm^{(\hj)}\sparseness\Li^{(\hj)}\La^{(\hj)}}{\error_\hj}}\right) \le
O\left(\La\tau_\hj^{3/2}\sqrt{\frac{\sparseness \sca\La A_{\hj-1}}{\error_\hj A_\hj}}\right).
\end{equation}
In this proof we take $\tau_\hj$ and $\error_\hj$ to be independent of $\hj$. To ensure that the number of queries is independent of $\hj$ (so no one term dominates the scaling), we take constant ratios $A_{\hj-1}/A_\hj$. To satisfy $A_0=\La$ and $A_\hm=\La/\sqrt{\sparseness}$, we can take $A_\hj=\La\sparseness^{-\hj/2\hm}$.
Then $\Lm^{(\hj)}=A_{\hj-1}=\La\sparseness^{(1-\hj)/2\hm}$, $\Li^{(\hj)}=\La^2/A_{\hj}=\La\sparseness^{\hj/2\hm}$, and the ratio between successive cutoffs is $A_{\hj-1}/A_\hj = \sparseness^{1/2\hm}$.

Next we ensure that the conditions of Lemma \ref{lemma6} hold. Condition \eqref{eq:helpful} follows immediately from $\La^{(\hj)}=\min\{\sca\La,\La^2/A_\hj\}$. To satisfy conditions \eqref{eq:rests} and \eqref{eq:rests2}, $\tau_\hj$ cannot be too small, but it must be small enough that the Trotter error is $O(\error)$. To achieve this, we choose $\tau_\hj$ to satisfy
\begin{equation}
\label{eq:tl}
\tau_\hj \ge \max \left\{ \frac{\error}{\hm^{3/2}\La^2 t} , \frac{\sca}{\La\sparseness^{1+1/2\hm}} \right\}.
\end{equation}
In addition, to apply the Trotter formula, $t/\tau_\hj$ must be an \emph{even} integer. Thus we take
\begin{equation}
\label{eq:tlb}
\tau_\hj = \frac{t}{2 \left\lfloor\min \left\{ \frac{\hm^{3/2}\La^2 t^2}{2\error} , \frac{\La\sparseness^{1+1/2\hm}t} {2\sca}\right\}\right\rfloor}.
\end{equation}

For this expression to be well-defined, the denominator must be nonzero. For the first term of the minimum, we find that
\begin{equation}
\frac{\hm^{3/2}\La^2 t^2}{2\error} > \frac{\hm^{3/2}}{2} > 1.
\end{equation}
The first inequality uses the condition $\La t > \sqrt\error$ and the second uses $\hm \ge 2$ (since otherwise we are not breaking up the Hamiltonian at all). For the second term to be at least 1, we require
\begin{equation}
\label{eq:vio}
\La\sparseness t \ge 2\sca \sparseness^{-1/2\hm}.
\end{equation}
If this does not hold, then we perform the simulation with Theorem \ref{theorem1} instead of Lemma \ref{lemma6}. Since $\La\ge\Hm$, we can take $\Lm=\La$. The condition $\error\sparseness > \La t > \sqrt\error$ implies that Eq.\ \eqref{eq:other} is $O(\sparseness\La t)$. Provided Eq.\ \eqref{eq:vio} is violated, we find that we can simulate the Hamiltonian with $O(\sparseness^{1/2\hm})$ queries. We will take $\hm \propto \log \sparseness$, so the simulation uses $O(1)$ queries, which is no more than Eq.\ \eqref{ex:smres}.
Thus, for the remainder of this proof, we assume that Eq.\ \eqref{eq:vio} holds, so Eq.\ \eqref{eq:tlb} is well-defined.

Using Eq.\ \eqref{eq:tlb}, we find that Eq.\ \eqref{eq:tl} is satisfied, and
\begin{equation}
\La^{(\hj)} \tau_\hj \ge \La \tau_\hj \ge \La \sqrt{\tau_\hj}\sqrt{\frac{\error}{\hm^{3/2}\La^2 t}} = \sqrt{\frac{\error\tau_\hj}{\hm^{3/2} t}}.
\end{equation}
The first inequality uses $\La^{(\hj)}\ge \La$ and the second uses Eq.\ \eqref{eq:tl}.
Taking
\begin{equation}
\error_\hj=\frac{\error\tau_\hj}{\hm^{3/2} t},
\end{equation}
we obtain $\La^{(\hj)} \tau_\hj \ge \sqrt{\error_\hj}$, so Eq.\ \eqref{eq:rests} is satisfied.
Equation \eqref{eq:rests2} follows from
\begin{equation}
\La^{(\hj)} \tau_\hj \ge \La^{(\hj)}\sca/\La\sparseness^{1+1/2\hm} \ge (\La^{(\hj)})^2/\Lm^{(\hj)}\Li^{(\hj)}\sparseness.
\end{equation}
Here the first inequality holds due to the second term of the maximum in Eq.\ \eqref{eq:tl}.

Now we use a $K$th order Lie-Trotter-Suzuki integrator to combine the simulations of the $H_\hj$ into a simulation of $H$.
The Strang splitting formula \cite{Strang} corresponds to $K=1$; larger values of $K$ correspond to higher-order Lie-Trotter-Suzuki formulae. In this proof we simply take $K=1$; in Section \ref{sec:largenorms} we will consider the case $K=2$.
The simulation resulting from a $K$th order integrator is approximate, introducing error \cite{Suzuki90,Suzuki91,Berry07}
\begin{equation}
\label{eq:inter}
O\left( \left[2\hm 5^{K-1}\tau\max_{\hj}\La^{(\hj)}\right]^{2K+1}\frac{t}{\tau} \right).
\end{equation}
Here $\tau$ is the time interval over which the integrator is repeated. That is, the time is broken up into $t/\tau$ intervals, and the same integrator is used on each of those intervals. By Eq.\ (A2) of Ref.\ \cite{Nathan10}, $\tau_\hj \ge \tau \times (3/2) 3^{-K}$. Thus, for a fixed value of $K$, $\tau = O(\tau_\hj)$. Also, the number of queries is increased by a factor of $5^K \hm$ due to the number of terms in the integrator.

To bound the error, we need to take account of the error due to the individual simulations and the error due to the Trotter formula.
There are $O(\hm t/\tau)$ terms in the Trotter formula for $K=1$, so the total error in performing the individual simulations (neglecting only the error introduced by the Trotter formula) is $O(\error_\hj\hm t/\tau)$. Because we take $\error_\hj=\error\tau_\hj/\hm^{3/2} t$, the total error due to the simulations is $O(\error/\hm^{1/2})$, which is $O(\error)$. 

With $K=1$ and $\hm\propto\log\sparseness$, the Trotter error from Eq.~\eqref{eq:inter} is $O(L^3 \La^3 \tau_\hj^2 t)$.
If $\error/\hm^{3/2}\La t \ge \sca/\sparseness^{1-1/2\hm}$, so that Eq.~\eqref{eq:tlb} gives $\tau_\hj=O(\error/\hm^{3/2}\La^2 t)$, then this Trotter error is $O(\error^2/\La t)$,
which is $O(\error)$ because $\La t> \error$.
Alternatively, if $\error/\hm^{3/2}\La t < \sca/\sparseness^{1+1/2\hm}$, then the Trotter error is
$O(\La t \hm^3\sca^2/\sparseness^{2+1/\hm})$,
which is $O(\error \hm^3\sca^2/\sparseness^{1+1/\hm})$. With $\hm\propto \log\sparseness$ and $\sca\le\sqrt{\sparseness}$, the Trotter error is $O(\error)$.

The total number of queries is given by \eqref{eq:witham} multiplied by $\hm t/\tau$, so we obtain a simulation using
\begin{equation}
\label{eq:amsca}
O\left( \hm^{7/4} (\La t)^{3/2} \sparseness^{1/2+1/(4\hm)}\sqrt{\sca/\error} \right)
\end{equation}
queries.
Now taking $\hm\propto \log\sparseness$, $\sparseness^{1/\hm}=O(1)$, so the number of queries is as given in Eq.\ \eqref{ex:smres}.
The condition $\error \sparseness > \La t > \sqrt\error$ ensures that this is at least 1.
\end{proof}

This theorem holds regardless of whether the norms are small.
In the worst case we can have $\sca=\sqrt\sparseness$, in which case the $\sparseness^{3/4}$ scaling is again obtained (as at the end of Sec.\ \ref{sec:full}).
On the other hand, if breaking up the Hamiltonian does not significantly increase the norm, then we obtain $\sqrt\sparseness$ scaling.

\subsection{Norms of components}
\label{sec:norms}

Now we present numerical results suggesting that for typical matrices, the spectral norms of the components are small, and $\brk(H)$ can be upper bounded by a constant. If we consider general Hamiltonians, then the norms of the components are almost always smaller than the norm of the original Hamiltonian.
(In this subsection, we use ``norm'' to mean the spectral norm.)
We tested general Hamiltonians by generating random Hermitian matrices with normally distributed elements. In no case was $\brk(H)$ more than $1.2$, as shown in Fig.\ \ref{fig:rand}. For large dimension, $\brk(H)$ approached 1.

We expect larger norms for the components when breaking up a Hamiltonian derived from a unitary matrix as discussed in Sec.\ \ref{sec:unit} below (see Eq.\ \eqref{eq:uh}). 
This is because unitaries can have a large difference between the spectral norm and the 1-norm, and the spectral norms of the individual components are bounded by the 1-norm of $H$.
To test this class of Hamiltonians, we generated random unitaries according to the Haar measure.
The values of $\brk(H)$ were larger than for random Hamiltonians, but were still no larger than $1.5$, as also shown in Fig.\ \ref{fig:rand}.

\begin{figure}
\centering
\includegraphics[width=0.45\textwidth]{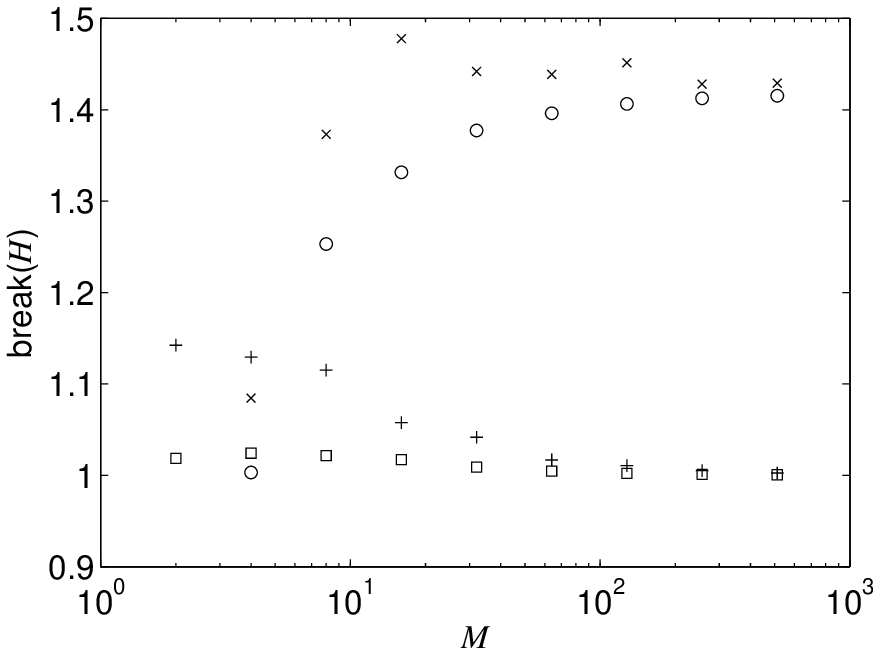}
\caption{The function $\brk(H)$ for random Hamiltonians. The plusses and squares are the maximum and mean values, respectively, obtained for 100 randomly generated Hermitian matrices. The crosses and circles are the maximum and mean values, respectively, for sets of 100 Hamiltonians composed of random unitaries.}
\label{fig:rand}
\end{figure}

One way to generate matrices that do have components with large norms is to perturb the quantum Fourier transform. We considered increasing the magnitude of the elements with positive real part by $0.01\%$ and decreasing the rest by $0.01\%$. The value of $\brk(H)$ (with $H$ constructed from this matrix as in Eq.\ \eqref{eq:uh}) was then proportional to $\sqrt{\HN}$ (see Fig.\ \ref{fig:path}).

\begin{figure}
\centering
\includegraphics[width=0.45\textwidth]{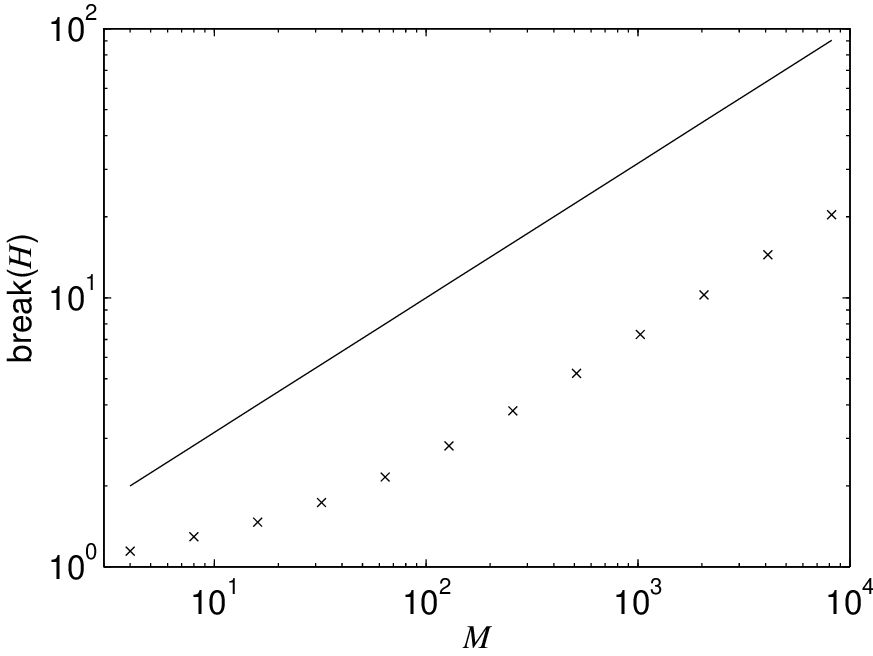}
\caption{The function $\brk(H)$ for a Hamiltonian composed of a matrix that has been produced by perturbing a quantum Fourier transform. The solid line is $\sqrt{M}$ for comparison.}
\label{fig:path}
\end{figure}

Although this example yields large norms for one splitting, the norms of the components are well-behaved with respect to other splittings. For example, a different threshold could be used, or we could introduce a smooth transition between the two components (i.e., for values in some transition region, part of the matrix element could go to one component and part to the other). However, we suspect that for any particular splitting, one can find examples that result in large norms for that splitting. 

\subsection{Large norms}
\label{sec:largenorms}

In this subsection we establish the improved simulation described in Theorem \ref{theorem2}, without relying on the assumption that the spectral norms of components remain small. We begin with an intuitive description of the method before giving the proof.

We again break the Hamiltonian into components $H_{\hj}$ according to the magnitudes of the matrix elements. In this case, the best available upper bound on the spectral norms of the components is $\La^2/A_\hj$. Using Lemma \ref{lemma6}, the number of queries to simulate each component involves a ratio $A_{\hj-1}/A_\hj^2$, in contrast to the corresponding ratio $A_{\hj-1}/A_\hj$ when the norms are assumed to be small (compare Eqs.\ \eqref{eq:witham} and \eqref{eq:rat2}). As a result, the cutoff values should be chosen to make $A_{\hj-1}/A_\hj^2$ constant, rather than to make $A_{\hj-1}/A_\hj$ constant.

In addition, the simulation of $H_{\hm}$ should not be performed via Lemma \ref{lemma6}, because that would result in an overall scaling no better than that provided by Lemma \ref{lemma6}. Instead, we use Theorem \ref{theorem1} to simulate $H_{\hm}$. By comparing the number of queries required to simulate $H_{\hm}$ and $H_{\hm-1}$, this means that (ignoring scaling in quantities other than $\sparseness$) we should have
\begin{equation}
\sqrt{\frac{\sparseness A_{\hm-2}}{A_{\hm-1}^2}} \approx \sparseness A_{\hm-1}.
\end{equation}
Here the expression on the left comes from using Lemma \ref{lemma6} to simulate $H_{\hm-1}$, and the expression on the right comes from using Theorem \ref{theorem1} for $H_{\hm}$. This expression means that $A_{\hm-2} \approx \sparseness A_{\hm-1}^4$. The restriction $A_{\hm-2}>A_{\hm-1}$ then means that $A_{\hm-1}\gtrapprox \sparseness^{-1/3}$. Therefore the number of queries is minimi{\s}ed for $A_{\hm-1} \approx A_{\hm-2} \approx \sparseness^{-1/3}$. Then $A_{\hj-1}/A_\hj^2 \approx \sparseness^{1/3}$, and the number of queries is roughly $\sparseness^{2/3}$.

To ensure that $A_0$ is independent of $\sparseness$, we must modify the above choices slightly. We choose a small constant $\splitsmall$, and take $A_{\hm-1}\propto \sparseness^{\splitsmall-1/3}$ and $A_{\hj-1}/A_\hj^2 \propto \sparseness^{1/3+2\splitsmall}$. Iterating gives $A_{\hm-2}\propto \sparseness^{4\splitsmall-1/3}$, $A_{\hm-3}\propto \sparseness^{10\splitsmall-1/3}$, and so forth. The sequence needs to give $A_0$ independent of $\sparseness$, but because the coefficient of $\splitsmall$ increases exponentially, $\hm$ need vary only logarithmically in $\splitsmall$.

This approach results in a number of queries to simulate each $H_\hj$ proportional to $\sparseness^{2/3+\xi}$. By choosing $\splitsmall\propto 1/\log\sparseness$, $\sparseness^{\xi}$ is $O(1)$. In addition, $\hm$ varies doubly logarithmically in $\sparseness$, which gives a double-logarithmic factor in the overall scaling in Theorem \ref{theorem2}. In the proof below, scaling in all quantities is considered, so it is convenient to define a quantity $\short$ that includes $\sparseness$ together with the other quantities we have omitted here. The scaling is then given in terms of $\short$, rather than explicitly in terms of $\sparseness$.

In order to show the result rigorously, we need to carefully choose the time intervals, because these are lower bounded by the conditions \eqref{eq:rests} and \eqref{eq:rests2} of Lemma \ref{lemma6}, and upper bounded by the need to ensure that the error in the Trotter formula is sufficiently small. This is challenging, because the bounds for the different components differ significantly.  The bounds on the Trotter error for $H_\hj$ decrease with $\hj$, so the lower bound on the time interval for $H_1$ is greater than the upper bound on the time interval for $H_\hm$. Thus it is not possible to combine these elements in the same Trotter formula while adequately bounding the error. To overcome this problem, we use nested Trotter formulae. We use a higher-order Lie-Trotter-Suzuki formula for $H_2$ through $H_\hm$, in order to obtain sufficiently small error despite the large upper bound on the norm of $H_\hm$. Then we use the Strang splitting to combine this product formula with $H_1$.

\begin{proof}[Proof of Theorem \ref{theorem2}.]
The Hamiltonian $H$ is again broken into $\hm$ pieces as in Eq.\ \eqref{eq:split}, again with $H_\hj = H^{A_\hj A_{\hj-1}}$ for $\hj < \hm$ and $H_\hm = H^{0A_{\hm-1}}$. For $\hj<\hm$, the norms are upper bounded as
\begin{equation}
\|H_\hj\| \le \|H_\hj\|_1 \le \Ha^2/A_\hj.
\end{equation}
The spectral norm of $H_\hm$ can be bounded more strongly:
\begin{align}
\|H_\hm\| &= \left\| H-\sum_{\hj=1}^{\hm-1} H_\hj\right\| \nonumber \\
&\le \Ha+\Ha^2/A_{\hm-1}.
\end{align}
Therefore, we can take $\La^{(\hj)}=\La^2/A_\hj$ for $\hj<\hm$ and $\La^{(\hm)}=\La+\La^2/A_{\hm-1}$. We can also take $\Li^{(\hj)}=\La^2/A_\hj$ for $\hj<\hm$, but for $\hj=\hm$ the best available bound gives $\Li^{(\hm)}=\La\sqrt{\sparseness}$. We have $\Lm^{(\hj)}=A_{\hj-1}$.

For $k\ge 1$, let
\begin{equation}
A_{\hm-k}=\La/\short^{1/3-(3\times 2^{k-1}-2)\splitsmall}
\end{equation}
where
\begin{align}
\short &\defeq \frac{\error \sparseness}{\hm\La t}, \\
\label{eq:splitsmall}
\splitsmall &\defeq \frac{1}{6(3\times 2^{\hm-2}-1)}.
\end{align}
With this choice, $A_0 = \La$ and $A_{\hm-1} = \La/\short^{1/3-\splitsmall}$; unlike in Section \ref{sec:smallnorms}, the ratio between successive cutoffs is not constant.
We break $H$ into
\begin{equation}
\label{eq:lval}
\hm \defeq \left\lceil \log_2 \left[ \frac 29 \log\left( \frac{\error \sparseness}{\La t} \right) + \frac 43 \right] \right\rceil
\end{equation}
pieces.

Note that if $\error \sparseness/\La t\le e^3$, then $L=1$, and we do not break up the Hamiltonian; we simply simulate $H$ using Theorem \ref{theorem1}.  Recall that by assumption, $\error \sparseness > \La t > \sqrt\error$.  Therefore, $\sparseness \La t > \La t/\sqrt\error > 1$. Since $\La \ge \|H\|_{\max}$, this shows that Theorem \ref{theorem1} uses $O(\sparseness \La t)$ queries.  Assuming $\error \sparseness/\La t\le e^3$, we have
\begin{align}
\sparseness \La t
&\le \sparseness \La t \left(\frac{e^3 \La t}{\sparseness\error}\right)^{1/3} \nn
&= e \sparseness^{2/3}\frac{(\La t)^{4/3}}{\error^{1/3}} \nn
&= O(\sparseness^{2/3}[(\log\log\sparseness)\La t]^{4/3} \error^{-1/3}).
\end{align}
This establishes Theorem \ref{theorem2} when $\error \sparseness/\La t\le e^3$. In the remainder of the proof, we assume that 
$\error \sparseness/\La t> e^3$, so $\hm>1$. It can also be shown that this implies $\short > 1$.

For $\hj<\hm$, Lemma \ref{lemma6} lets us simulate the Hamiltonian $H_\hj$ for time $\tau_\hj$ using
\begin{equation}
O\left( \tau_\hj^{3/2} \sqrt{\frac{\Lm^{(\hj)}\sparseness\Li^{(\hj)}\La^{(\hj)}}{\error_\hj}} \right)
\end{equation}
queries.
Conditions \eqref{eq:rests} and \eqref{eq:rests2} of Lemma \ref{lemma6} are satisfied provided $\tau_\hj$ is sufficiently small; we verify this below when choosing $\tau_\hj$ in the analysis of the Trotter error. The condition \eqref{eq:helpful} is trivial for $\hj<\hm$. We set $\error_\hj=\error\tau_\hj/\hm t$ to ensure that the contribution to the error from the simulations is $O(\error)$. Thus for $\hj<\hm$, the number of queries used to simulate $H_\hj$ for time $\tau_\hj$ is
\begin{equation}
\label{eq:rat2}
O\left(\La^2 \tau_\hj\sqrt{\frac{\hm\sparseness A_{\hj-1}t}{\error A_\hj^2}}\right).
\end{equation}
A simple calculation shows that
\begin{equation}
A_{\hj-1}/A_\hj^2 = \short^{1/3+2\splitsmall}/\La.
\label{eq:ratiowithsquare}
\end{equation}
Thus the query complexity of simulating $H_{\hj}$ for $\hj<\hm$ is
\begin{equation}
\label{eq:bbsc}
O\left(\sparseness^{2/3}\La\tau_\hj\short^\splitsmall\left(\frac{\hm \La t}{\error}\right)^{1/3}\right).
\end{equation}

To simulate $H_\hm$ for time $\tau_\hm$, we apply Theorem \ref{theorem1}, at a cost of
\begin{equation}
\label{eq:th1ex}
O\left( \frac{\La^2\sqrt{\hm t\tau_\hm}}{A_{\hm-1}\sqrt{\error}} + \sparseness A_{\hm-1}\tau_\hm + 1\right)
\end{equation}
queries.
By a simple calculation,
\begin{equation}
\sparseness A_{\hm-1} \tau_\hm = \sparseness^{2/3} \La \tau_\hm \short^\splitsmall\left(\frac{\hm \La t}{\error}\right)^{1/3},
\end{equation}
so the query complexity of simulating $H_\hm$ is also given by \eqref{eq:bbsc} provided the second term of Eq.\ \eqref{eq:th1ex} is dominant.
We verify this after choosing $\tau_\hm$ below.

Now we analyze the Trotter error.  We use a two-step process to combine the terms of $H$. First we use a Trotter formula for the two components $H_1$ and $\sum_{\hj=2}^{\hm}H_{\hj}$. Then we combine the terms of $\sum_{\hj=2}^{\hm}H_{\hj}$ using another Trotter formula. We do this because large time steps are needed for $H_1$, but its norm is small, whereas the time steps for the remaining $H_{\hj}$ can be smaller, but the norms are larger.

To combine $H_1$ and $\sum_{\hj=2}^{\hm}H_{\hj}$, the minimum time step is set by the restrictions \eqref{eq:rests} ($\La^{(\hj)}\tau_\hj \ge \sqrt{\error_\hj}$) and \eqref{eq:rests2} ($\tau_\hj \ge \La^{(\hj)}/\Lm^{(\hj)}\Li^{(\hj)}\sparseness$) for $H_1$. For general $\hj$, using the choice $\error_\hj = \error \tau_\hj/\hm t$, we see that these restrictions are satisfied provided
\begin{equation}
\tau_\hj \ge \max \left\{ \frac{\error}{\hm(\La^{(\hj)})^2t}, \frac{\La^{(\hj)}}{\Lm^{(\hj)}\Li^{(\hj)}\sparseness}\right\}.
\label{eq:outertimestep}
\end{equation}
For $\hj<\hm$, a simple calculation shows that
\begin{align}
\frac{\La^{(\hj)}}{\Lm^{(\hj)}\Li^{(\hj)}\sparseness}
&= \frac{1}{A_{\hj-1}\sparseness} \nn
&= \frac{\error}{\hm \La^2 t} \short^{-2/3-(3\times2^{\hm-\hj}-2)\splitsmall} \nn
&= \frac{\error}{\hm (\La^{(\hj)})^2 t} \short^{-6(2^{\hm-\hj}-1)\splitsmall},
\end{align}
where in the third line we have used
\begin{equation}
\frac{\La^{(\hj)}}{\La} = \frac{\La}{A_\hj}
= \short^{1/3 - (3 \times 2^{\hm-\hj-1}-2)\splitsmall}.
\label{eq:usefulratio}
\end{equation}
Therefore, since $\short > 1$, the first term of Eq.\ \eqref{eq:outertimestep} is larger than the second, and it suffices to take
\begin{equation}
\tau_\hj \ge \frac{\error}{\hm(\La^{(\hj)})^2t}
= \frac{\error A_\hj^2}{\hm\La^4t}.
\label{eq:tausufficient}
\end{equation}

Since $\short > 1$ implies $A_\hj<A_{\hj-1}$, this lower bound decreases with increasing $\hj$. Thus it suffices to ensure that $\tau_1$ is sufficiently large.
Here we use the $K=1$ integrator, so the ratio $t/\tau_1$ must be an even integer. We can achieve this, and ensure $\tau_1\ge{\error A_1^2}/{\hm\La^4 t}$, by taking
\begin{equation}
\tau_1 = \frac{t}{2\left\lfloor \frac {t^2\hm\La^4}{2\error A_1^2}\right\rfloor}.
\end{equation}
This expression is finite because
\begin{equation}
\frac {t^2\hm\La^4}{2\error A_1^2} = \frac {(\La t)^2\hm\short^{1/3+2\splitsmall}}{2\error}>\frac {\hm\short^{1/3+2\splitsmall}}{2} > 1.
\end{equation}
The equality uses Eq.\ \eqref{eq:ratiowithsquare} to compute $A_1$ in terms of $A_0=\La$, the first inequality uses the assumption $\La t > \sqrt\error$, and the last inequality uses $\short>1$ and $\hm \ge 2$.

The norms of the two components in the Trotter formula are bounded as
\begin{align}
\|H_1\| &\le \Ha^2/A_1 \le \La^2/A_1, \\
\left\| \sum_{\hj=2}^{\hm}H_{\hj} \right\| &\le \Ha+\Ha^2/A_1 \le 2 \La^2/A_1,
\end{align}
where the second line uses $A_1 \le A_0 = \La$.
Thus, by Eq.\ \eqref{eq:inter}, the Trotter error for combining $H_1$ and $\sum_{\hj=2}^\hm H_\hj$ with a $K=1$ integrator is
\begin{align}
O\left( \frac{\tau_1^2\La^6 t}{A_1^3}\right)
&= O\left( \frac{\error^2 A_1}{\hm^2\La^2t}\right) \nn
&\le O\left( \frac{\error^2 }{\hm^2 \La t}\right) \nn
&\le O(\delta),
\end{align}
where in the last step we have used $\error<\La t$, which follows from $\sqrt\error<\La t$ and $\error\le 1$.

Next we combine the $H_{\hj}$ with $\hj>1$, giving a simulation for time $\tau_1$.  We assume that $\hm\ge 3$ so there are at least two such terms to combine; then $\error\sparseness/\La t>e^{12}$.  By Eq.\ \eqref{eq:tausufficient} for $\hj=2$, the conditions of Lemma \ref{lemma6} are satisfied if we use time intervals of at least ${\error A_2^2}/{\hm\La^4 t}$.  However, we must choose time intervals that are compatible with the form of the integrator.  In this case, we use the $K=2$ integrator (see Section~\ref{sec:smallnorms}), which involves using two different intervals denoted $\tau_2^{(1)}$ and $\tau_2^{(2)}$.  We use the same pair of intervals for all $\hj\ge 2$.

The integrator requires intervals of the form 
\begin{align}
 \tau_2^{(1)} &= p_2 \tau_1/2\nu \\
 \tau_2^{(2)} &= (4p_2-1)\tau_1/2\nu
\end{align}
for some positive integer $\nu$, where $p_2 \defeq 1/(4-4^{1/3})$. Since $p_2<4p_2-1$, $\tau_2^{(2)} > \tau_2^{(1)}$, so to satisfy the conditions of Lemma \ref{lemma6}, it suffices to ensure that $\tau_2^{(1)}\ge {\error A_2^2}/{\hm\La^4 t}$.  We enforce this by choosing
\begin{align}
 \nu \defeq \left\lfloor \frac{p_2\tau_1 \hm \La^4 t}{2\error A_2^2} \right\rfloor .
\end{align}
This is a positive integer because
\begin{equation}
\frac{p_2\tau_1 \hm \La^4 t}{2\error A_2^2} \ge \frac{p_2 A_1^2}{2 A_2^2} = \frac{\short^{1/6+\splitsmall}}{2(4-4^{1/3})} >1.
\end{equation}
The first inequality uses $\tau_1\ge{\error A_1^2}/{\hm\La^4 t}$. The final inequality holds since $\error\sparseness/\La t>e^{12}$, as discussed above.

With this choice in hand, we can now verify that the second term of Eq.\ \eqref{eq:th1ex} is dominant. Henceforth we omit the superscripts on the time intervals, as they only differ by a multiplicative constant.
Since $A_2 = \La/\short^{1/4+3\splitsmall/2}$, $\tau_\hm = \tau_2 = \Theta(\short^{1/2-3\splitsmall}/\sparseness\La)$, and the second term of Eq.\ \eqref{eq:th1ex} is
\begin{align}
\sparseness A_{\hm-1} \tau_{\hm}
&= \Theta(A_{\hm-1}\short^{1/2-3\splitsmall}/\La)
= \Theta(\short^{1/6-2\splitsmall}).
\label{eq:secondterm}
\end{align}
In comparison, the first term is
\begin{align}
\frac{\La^2 \sqrt{\hm t \tau_\hm}}{A_{\hm-1}\sqrt{\delta}}
&= \Theta\left(\La \short^{1/3-\splitsmall} \frac{1}{\sqrt{\La \short}} \sqrt{\frac{\short^{1/2-3\splitsmall}}{\La}}\right) \nn
&= \Theta(\short^{1/12-5\splitsmall/2}),
\end{align}
which is smaller than \eqref{eq:secondterm} since $\short>1$.
We claim that the third term of Eq.\ \eqref{eq:th1ex} can also be neglected.
To see this, first note that the choice of $\hm$ in Eq.\ \eqref{eq:lval} ensures that $\splitsmall\le 1/\log\short$, so $\short^\splitsmall \le e$. By Eq.\ \eqref{eq:secondterm}, this implies that $\sparseness A_{\hm-1} \tau_{\hm} = \Theta(\short^{1/6-2\splitsmall}) = \Omega(1)$.
It follows that Eq.\ \eqref{eq:bbsc} also gives an upper bound on the number of queries needed to simulate $H_\hm$ for time $\tau_\hm$.

Now we analyze the error in the Trotter formula for $\sum_{\hj=2}^\hm H_\hj$.
The norm of the $H_\hj$ for $\hj \ge 2$ is largest for $\hj = \hm$, in which case we have the bound
\begin{align}
\|H_\hm\| &\le \Ha+\Ha^2/A_{\hm-1} \nonumber \\
&= O\left(\La\short^{1/3-\splitsmall}\right).
\end{align}
By Eq.\ \eqref{eq:inter}, the error in the $K=2$ integrator is
\begin{align}
O\left( \left[ \frac{\hm \tau_2 \La^2}{A_{\hm-1}}\right]^{5} \frac{t}{\tau_2}\right)
&= O\left( \frac{L \error^4 A_2^8 \short^{5/3 - 5\splitsmall}}{\La^{11}t^3} \right) \nn
&= O\left( \frac{\error^4 \hm }{(\La t)^3 \short^{1/3+17\splitsmall}}\right) \nn
&\le O\left( \error \left[\frac{\error}{\La t}\right]^{8/3}\frac{\hm^{4/3}}{\sparseness^{1/3}}\right) \nn
&\le O(\delta).
\end{align}
In the last line we have used the assumption $\error<\La t$ and Eq.\ \eqref{eq:lval} for $\hm$, which shows that $\hm=O(\log\log\sparseness)$.

So far we have only considered the number of queries to simulate the individual $H_{\hj}$. For the complete simulation, there is an additional factor of $\hm$ to take account of the integrators. Therefore, the total number of queries is
\begin{equation}
\label{eq:splitsca}
O\left( \frac{\sparseness^{2/3}(\hm\La t)^{4/3}\short^\splitsmall}{\error^{1/3}} \right).
\end{equation}
As discussed above, $\short^\splitsmall\le e$, so this factor can be ignored.  Overall, we find that
\begin{equation}
O\left( \frac{\sparseness^{2/3} {[(\log\log\sparseness)\La t]^{4/3}}}{\error^{1/3}} \right)
\end{equation}
queries suffice for the simulation, as claimed.
\end{proof}

\section{Implementation of unitaries}
\label{sec:unit}

Next we explain how to implement a unitary transformation using the results for simulation of Hamiltonians.
A simple way to implement a unitary transformation $U$, as proposed by Jordan and Wocjan \cite{Jordan09} (and independently observed by one of us), is to simulate the Hamiltonian
\begin{equation}
\label{eq:uh}
H=\begin{bmatrix}
0 & U  \\
U^\dagger & 0  \\
\end{bmatrix}.
\end{equation}
The Hilbert space consists of a qubit tensored with the target space.
Since $H^2=\openone$, we have
\begin{equation}
e^{-iHt} = \cos(t) \openone - i \sin(t) H,
\end{equation}
and applying this Hamiltonian for time $t=\pi/2$ yields the evolution
\begin{equation}
\label{eq:pi2ev}
e^{-iH\pi/2} \ket{1}\ket{\psib} = -i \ket{0}U\ket{\psib},
\end{equation}
which is sufficient to implement $U$.

Properties of the unitary $U$ and its associated Hamiltonian in Eq.\ \eqref{eq:uh} are closely related. The dimension of the Hamiltonian, $\HN$, is simply twice the dimension of the unitary, $\UN$.  In addition, we have
\begin{align}
\label{eq:relnorm1}
\Ha&=\|U\|=1, \\
\Hi&=\max\{\|U\|_1,\|U^\dagger\|_1\}, \\
\label{eq:relnorm3}
\Hm&=\|U\|_{\max}.
\end{align}

We assume that the matrix elements of $U$ are given by an oracle $O_U$ as in Eq.\ \eqref{eq:uoracle}. This oracle can trivially be used to construct an oracle $O_H$ for the Hamiltonian as in Eq.\ \eqref{eq:horacle}. Each call to $O_H$ uses one call to $O_U$, so black-box Hamiltonian simulation results can be applied directly to black-box unitary implementation. However, for unitary implementation we can take advantage of the fact that the eigenvalues of the Hamiltonian are restricted.
\begin{lemma}
\label{lemma3}
Suppose $H$ has eigenvalues $\pm 1$ and $\pi/[2\arcsin(\unlaziness/\Li)]$ is an odd integer, for $\unlaziness\in(0,1]$.  Using a quantum walk with states $\ket{\phib_j}$ as in Eq.\ \eqref{eq:alt3}, evolution for time $\pi/2$ can be simulated exactly using $O(\Li/\unlaziness)$ queries.
\end{lemma}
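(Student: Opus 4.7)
The plan is to exploit the fact that when $H$ has only two eigenvalues $\pm 1$, the walk operator $V$ of Eq.~\eqref{eq:walkstep} has only four distinct eigenvalues, and a suitable integer power of $V$ implements $e^{-iH\pi/2}$ exactly, with no need for the phase estimation or sign correction used in Lemma~\ref{lemma2}. From the eigenvalue relation $\tilde\lambda = \unlaziness\lambda/\Li$ obtained by replacing $\laz,\Hi$ with $\unlaziness,\Li$ in Eq.~\eqref{eq:lamrel}, the two eigenvalues $\lambda=\pm 1$ give $\tilde\lambda=\pm\unlaziness/\Li$. Setting $\theta\defeq\arcsin(\unlaziness/\Li)$, the eigenvalues of $V$ read off from Eq.~\eqref{eq:veigenvalues} are $\mu_+^{+1}=e^{i\theta}$, $\mu_-^{+1}=-e^{-i\theta}$, $\mu_+^{-1}=e^{-i\theta}$, and $\mu_-^{-1}=-e^{i\theta}$.

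The central computation is to raise each of these to the $d$th power, where $d\defeq\pi/(2\theta)$ is odd by hypothesis. For $\lambda=+1$, $(\mu_+^{+1})^d=e^{i\pi/2}=i$ and $(\mu_-^{+1})^d=(-1)^d e^{-i\pi/2}=i$, where the factor $(-1)^d=-1$ coming from the oddness of $d$ exactly cancels the sign of $e^{-i\pi/2}=-i$. A parallel calculation for $\lambda=-1$ gives $(\mu_\pm^{-1})^d=-i$ in both cases. Thus on each eigenspace of $H$, both $V$-eigenvectors acquire the same phase $e^{i\lambda\pi/2}$. Since $T\ket\lambda$ is a superposition of $\ket{\mu_+^\lambda}$ and $\ket{\mu_-^\lambda}$ only, we conclude $V^d T\ket\lambda=e^{i\lambda\pi/2}T\ket\lambda$, so $T^\dagger V^d T=e^{iH\pi/2}$ on the image of $T$.

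To obtain $e^{-iH\pi/2}$ rather than $e^{iH\pi/2}$, I would simply substitute $(V^\dagger)^d$ for $V^d$ (equivalently, apply a global sign, since $e^{iH\pi/2}$ and $e^{-iH\pi/2}$ differ only by $-1$ when $H^2=\openone$, as both equal $\pm iH$). Counting walk steps, exactly $d=\pi/(2\arcsin(\unlaziness/\Li))$ applications of $V^\dagger$ suffice, and since $\arcsin x\ge x$ for $x\in[0,1]$, we have $d\le \pi\Li/(2\unlaziness)=O(\Li/\unlaziness)$, matching the claimed query count (one query per walk step via the implementation of Sec.~\ref{sec:spar}).

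The only delicate point, and the heart of the argument, is verifying that the oddness of $d$ is precisely the condition making the two $V$-eigenvectors in each $H$-eigenspace acquire \emph{identical} phases under $V^d$; for even $d$, the factor $(-1)^d=+1$ would leave opposite phases on $\ket{\mu_+^\lambda}$ and $\ket{\mu_-^\lambda}$, and one would again need phase estimation to separate them as in Lemma~\ref{lemma2}, losing exactness. All other steps are routine.
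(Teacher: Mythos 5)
Your argument is correct and is essentially the paper's proof, just written out in more detail: both rest on the observation that with eigenvalues $\lambda=\pm1$ and $d=\pi/[2\arcsin(\unlaziness/\Li)]$ odd, the two walk eigenvalues $\mu_\pm^\lambda$ raised to the $d$th power coincide at $\pm i$, so $V^d$ (or $(V^\dagger)^d$) reproduces $e^{\mp iH\pi/2}$ exactly up to a global sign, using $d=O(\Li/\unlaziness)$ walk steps. Your explicit check that the oddness of $d$ cancels the $(-1)^d$ factor is the same mechanism the paper leaves implicit, so no substantive difference.
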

\begin{proof}
Since the eigenvalues of $H$ are $\lambda = \pm 1$, the relationship between $\arcsin\tilde\lambda$ and $\tilde\lambda$ is simple: taking
\begin{equation}
d = \frac{\pi}{2\arcsin(\unlaziness/\Li)},
\end{equation}
the eigenvalues of $V^d$ are $+i$ for $\lambda=1$ and $-i$ for $\lambda=-1$.  These eigenvalues are equivalent (up to the minus sign) to evolution under the Hamiltonian $H$ for time $\pi/2$.
\end{proof}

This result can be used to exactly implement unitary operators via a quantum walk. The scaling is as follows:
\begin{theorem}
\label{thm:exact}
Given a black-box unitary $U$, let $\Lm\ge\|U\|_{\max}$.
Then $U$ can be implemented exactly with $O\left( \UN \Lm \right)$ queries to $O_U$.
\end{theorem}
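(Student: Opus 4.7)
The plan is to use the reduction from unitary implementation to Hamiltonian simulation given by Eq.~\eqref{eq:uh}, combined with the exact simulation method of Lemma~\ref{lemma3} and the state-preparation procedure of Lemma~\ref{lemma1}. Set $H$ to be the $2\UN\times 2\UN$ block-off-diagonal matrix with blocks $U$ and $U^\dagger$. By Eq.~\eqref{eq:pi2ev}, evolving under $H$ for time $\pi/2$ on the input $\ket{1}\ket\psi$ deterministically produces $-i\ket{0}U\ket\psi$, so it suffices to simulate $e^{-iH\pi/2}$ exactly.

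Since $U$ is not sparse in general, I would take $\sparseness = \HN = 2\UN$, and let $O_\F$ be the identity. Using the relations \eqref{eq:relnorm1}--\eqref{eq:relnorm3} together with the bound $\|U\|_1\le \UN\|U\|_{\max}\le \UN\Lm$, I would set $\Li\defeq \UN\Lm$, so that the hypothesis $\unlaziness\le \Li/\sparseness\Lm$ of Lemma~\ref{lemma1} becomes simply $\unlaziness\le 1/2$. Then each step of the quantum walk defined by the states $\ket{\phib_j}$ of Eq.~\eqref{eq:alt3} can be performed using $O(1)$ queries to $O_H$, each of which is a single query to $O_U$.

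To invoke Lemma~\ref{lemma3} I need $\unlaziness\le 1/2$ and the quantity $\pi/[2\arcsin(\unlaziness/\Li)]$ to be an odd integer simultaneously. I would therefore choose $d$ to be the smallest odd integer satisfying $d\ge \pi\Li = \pi\UN\Lm$, and then \emph{define}
\begin{equation}
\unlaziness \defeq \Li \sin\!\left(\frac{\pi}{2d}\right).
\end{equation}
With this choice $\pi/[2\arcsin(\unlaziness/\Li)] = d$ is by construction an odd integer, and since $\sin(\pi/(2d))\le \pi/(2d)\le 1/(2\Li)$, we also obtain $\unlaziness\le 1/2$ as required. Lemma~\ref{lemma3} then gives an \emph{exact} simulation of $e^{-iH\pi/2}$ using $d=O(\UN\Lm)$ walk steps, and Lemma~\ref{lemma1} contributes only a constant multiplicative overhead per step, yielding $O(\UN\Lm)$ queries to $O_U$ in total.

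The only subtlety worth flagging is the compatibility of the two conditions on $\unlaziness$: Lemma~\ref{lemma1} forces $\unlaziness$ to be small (scaling as $1/\sparseness\Lm$ in units of $\Li$), while Lemma~\ref{lemma3} constrains $\unlaziness$ to take one of a discrete set of values indexed by odd integers $d$. Choosing $d=\Theta(\UN\Lm)$ slots a valid value into the admissible range and drives both the walk-step count and the overall query count, so the apparent ``exactness'' of Theorem~\ref{thm:exact} comes entirely from reading off $d$ from the spectrum $\{\pm 1\}$ of $H$ rather than from any new simulation technology. No error analysis is needed because Lemma~\ref{lemma3} is already exact and Lemma~\ref{lemma1}'s state preparation is exact up to synthesis of controlled rotations, which is a gate-complexity rather than query-complexity issue.
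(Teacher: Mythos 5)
Your proposal is correct and follows essentially the same route as the paper: reduce to the block Hamiltonian of Eq.~\eqref{eq:uh}, use Lemma~\ref{lemma3} for the exact walk-based simulation of $e^{-iH\pi/2}$ (choosing the laziness so that $\unlaziness/\Li=\sin(\pi/2d)$ for an odd integer $d=\Theta(\UN\Lm)$), and use Lemma~\ref{lemma1} for $O(1)$-query walk steps, just as the paper does with its explicit $X$ and $d$ in Eqs.~\eqref{eq:X}--\eqref{eq:d}. The only (trivially fillable) omission is that $d=O(\UN\Lm)$ uses $\UN\Lm\ge\UN\|U\|_{\max}\ge\sqrt{\UN}\ge 1$, a fact the paper invokes explicitly.
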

Since we are primarily concerned with implementation of general unitaries, which are not sparse, we express unitary implementation results in terms of the dimension $\UN$ rather than the sparseness parameter $\sparseness$.
\begin{proof}
This implementation proceeds by simulating the Hamiltonian given in Eq.\ \eqref{eq:uh} for time $\pi/2$ using Lemma \ref{lemma3}, with the steps of the quantum walk implemented using Lemma \ref{lemma1}. The Hamiltonian has no more than $\UN$ nonzero elements in any row of column, so we can take $\sparseness=\UN$.

Take $\unlaziness=\Li/\UN X$, where
\begin{align}
\label{eq:X}
X &= \frac 1 {\UN\sin[\pi/(2d)]}, \\
\label{eq:d}
d &= 2\left\lceil \frac{\pi}{4\arcsin[1/(\Lm\UN)]}-\frac 12 \right\rceil+1.
\end{align}
It is easily shown that $X\ge \Lm$, so $\unlaziness\le\Li/D\Lm\le 1$. In addition, $\pi/[2\arcsin(\unlaziness/\Li)]$ is an odd integer, so the conditions of Lemma \ref{lemma3} are satisfied. Then, using Lemma \ref{lemma3}, the Hamiltonian can be simulated for time $\pi/2$ using $O(\Li/\unlaziness)=O(\UN X)$ steps of the quantum walk.

Because $\unlaziness\le\Li/D\Lm$, we can use Lemma \ref{lemma1}, and each step of the quantum walk can be implemented using $O(1)$ queries. Thus the total number of queries is $O(\UN X)$. Because $\Lm\ge \Hm \ge 1/\sqrt{\UN}$, $X\le 2\Lm$, and the number of queries is $O(\UN\Lm)$.
\end{proof}

Our other results on Hamiltonian simulation can also be used to implement unitaries, although in these cases there are other sources of error, so the simulation can no longer be performed exactly.
In each case we take $t=\pi/2$, $\Ha=1$, and $\sparseness=\UN$.
Lemma \ref{lemma6} yields the following corollary for unitary implementation.
\begin{corollary}
\label{corr11}
Given a black-box unitary U, let $\Li\ge\max\{\|U\|_1,\|U^\dagger\|_1\}$ and $\Lm\ge\|U\|_{\rm max}$.  Then $U$ can be implemented with error at most $\error\in(0,1]$ using
\begin{equation}
\label{eq:cor11}
O\left( \sqrt{{\Lm\UN\Li}/{\error}}\right)
\end{equation}
queries to $O_U$.
\end{corollary}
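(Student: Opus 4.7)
The plan is to reduce implementation of $U$ to simulation of the block Hamiltonian of Eq.\ \eqref{eq:uh} for time $t=\pi/2$, and then invoke Lemma \ref{lemma6} on this Hamiltonian. The construction of $H$ in Eq.\ \eqref{eq:uh} yields a Hermitian operator on a space of dimension $\HN=2\UN$ with at most $\sparseness=\UN$ nonzero entries per row and column. By Eqs.\ \eqref{eq:relnorm1}--\eqref{eq:relnorm3}, its norms are $\Ha=1$, $\Hm=\|U\|_{\max}$, and $\Hi=\max\{\|U\|_1,\|U^\dagger\|_1\}$; the given $\Li$ and $\Lm$ bound these (viewing $\Li$ as a bound on both the row and column $1$-sums of $U$, as is natural for a unitary). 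Each query to the induced oracle $O_H$ costs one query to $O_U$, so it suffices to count queries to $O_H$.

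With these identifications in hand, I would apply Lemma \ref{lemma6} with $\La=1$, $\sparseness=\UN$, $t=\pi/2$, and the given $\Li$ and $\Lm$. The complexity formula then reads
$$O\!\left(t^{3/2}\sqrt{\Lm\sparseness\Li\La/\error}\right)=O\!\left(\sqrt{\Lm\UN\Li/\error}\right),$$
since $(\pi/2)^{3/2}$ is an absolute constant. The three side conditions of Lemma \ref{lemma6} are easily verified: condition \eqref{eq:rests} is $\pi/2\ge\sqrt\error$, which holds for $\error\in(0,1]$; condition \eqref{eq:helpful} is $1\le\Li$, which holds because each column of $U$ is a unit vector and thus has $\ell_1$-norm at least $1$, giving $\|U\|_1\ge 1$; and condition \eqref{eq:rests2} becomes $\pi/2\ge 1/(\Lm\Li\UN)$, which follows from $\Li\ge 1$ together with $\Lm\ge\|U\|_{\max}\ge 1/\sqrt{\UN}$, so that $\Lm\Li\UN\ge\sqrt\UN\ge 1$.

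There is no substantive obstacle; the proof is a direct specialisation of Lemma \ref{lemma6} using the dictionary of Section \ref{sec:unit}. The only detail worth spelling out is the conversion from simulation error to implementation error: simulating $e^{-iH\pi/2}$ with trace-distance error $\error$ and evaluating on any input of the form $\ket{1}\ket{\psib}$ produces a state within trace distance $\error$ of $-i\ket{0}U\ket{\psib}$ by Eq.\ \eqref{eq:pi2ev}, so restricting to the $\ket{0}$ component of the ancilla yields an implementation of $U$ with error at most $\error$, and any multiplicative constant lost in this reduction is absorbed into the big-$O$.
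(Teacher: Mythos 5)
Your proposal is correct and follows essentially the same route as the paper: specialize Lemma \ref{lemma6} to the block Hamiltonian of Eq.\ \eqref{eq:uh} with $\La=\Ha=1$, $t=\pi/2$, $\sparseness=\UN$, and check conditions \eqref{eq:rests}--\eqref{eq:helpful} using $\error\le1$, $\Lm\ge1/\sqrt{\UN}$, and $\Li\ge\|U\|_1\ge1$, exactly as the paper does. Your added remarks on the error conversion via Eq.\ \eqref{eq:pi2ev} and on $\Li$ bounding both $\|U\|_1$ and $\|U^\dagger\|_1$ are harmless elaborations, not deviations.
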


\begin{proof}
We apply Lemma \ref{lemma6} together with the norm bounds in Eqs.\ \eqref{eq:relnorm1} to \eqref{eq:relnorm3}. We use $t=\pi/2$ and $\Ha=1$ to obtain Eq.\ \eqref{eq:cor11}. We omit conditions \eqref{eq:rests} to \eqref{eq:helpful} of Lemma \ref{lemma6} as they are automatically satisfied. First, the condition \eqref{eq:rests} holds because $\error\le 1$. Second, \eqref{eq:rests2} holds because $\Lm\ge 1/\sqrt{N}$ and $\Li\ge\|U\|_1\ge 1$. Third, \eqref{eq:helpful} holds because $\Li\ge 1=\La$.
\end{proof}

Similarly, Theorem \ref{theoremsm} yields the following.

\begin{corollary}
\label{corollsm}
Let $\sca\in[\brk(U),\sqrt{\UN}]$.  The unitary operation $U$ can be implemented with error at most $\error\in(0,1]$ using
\begin{equation}
\label{ex:smcor}
O\left( \sqrt{\sca\UN/\error} (\log\UN)^{7/4} \right)
\end{equation}
queries to $O_H$ and $O_\F$, provided $\delta\UN > \pi/2$.
\end{corollary}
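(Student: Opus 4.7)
The plan is to apply Theorem \ref{theoremsm} to the Hamiltonian $H$ defined in Eq.~\eqref{eq:uh} and simulate it for time $t = \pi/2$, exactly as in the proofs of Theorem \ref{thm:exact} and Corollary \ref{corr11}. By Eq.~\eqref{eq:pi2ev}, this evolution implements $U$ up to a global phase, and one query to $O_U$ suffices to implement one query to $O_H$. From the norm identities \eqref{eq:relnorm1}--\eqref{eq:relnorm3} we have $\Ha = 1$, so we may take $\La = 1$; and since $H$ has at most $\UN$ nonzero entries per row, we may take $\sparseness = \UN$.

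The main preliminary check is to verify the hypotheses of Theorem \ref{theoremsm} with these parameters. The required condition $\error\sparseness > \La t > \sqrt\error$ reduces to $\error\UN > \pi/2 > \sqrt\error$: the left inequality is precisely the hypothesis $\delta\UN > \pi/2$ of the corollary, and the right inequality follows from $\error \le 1$. Next I must ensure that any $\sca \in [\brk(U), \sqrt\UN]$ is admissible for the theorem, i.e.\ that $\sca \ge \brk(H)$ and $\sca \le \sqrt\sparseness$. The second bound is immediate, and the first reduces to the identity $\brk(H) = \brk(U)$, which I would verify as follows: the threshold matrix $H^{ab}$ inherits the block-antidiagonal structure of $H$ with blocks $U^{ab}$ and $(U^{ab})^\dagger$, so $\|H^{ab}\| = \|U^{ab}\|$; combined with $\|H\| = \|U\| = 1$, the equality $\brk(H) = \brk(U)$ follows directly from the definition of $\brk$.

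With these checks in hand, the bound of Theorem \ref{theoremsm}, namely $O(\sqrt{\sca\sparseness/\error}(\log\sparseness)^{7/4}(\La t)^{3/2})$, specializes to $O(\sqrt{\sca\UN/\error}(\log\UN)^{7/4})$ upon absorbing the constant $(\pi/2)^{3/2}$ into the big-$O$, which is exactly Eq.~\eqref{ex:smcor}. I do not expect any substantive obstacle; the entire argument is a routine reduction, and the only step requiring a short verification is the identification $\brk(H) = \brk(U)$, which is essentially immediate from the block structure of $H$.
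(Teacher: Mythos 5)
Your proposal is correct and follows essentially the same route as the paper: apply Theorem \ref{theoremsm} to the Hamiltonian of Eq.~\eqref{eq:uh} with $\La=\Ha=1$, $t=\pi/2$, $\sparseness=\UN$, and note that the condition $\error\sparseness>\La t>\sqrt\error$ reduces to $\error\UN>\pi/2$ since $\error\le 1$. Your explicit check that $\brk(H)=\brk(U)$ via the block-antidiagonal structure (so $\|H^{ab}\|=\|U^{ab}\|$ and $\|H\|=\|U\|=1$) is a correct detail that the paper leaves implicit.
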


\begin{proof}
We use Theorem \ref{theoremsm} with $\La=\Ha=1$, $t=\pi/2$, and $\sparseness=\UN$.
Then the number of queries is as in Eq.~\eqref{ex:smcor}.
Since $\Ha t=\pi/2$, the condition $\delta\sparseness > \La t > \sqrt\error$ in Theorem \ref{theoremsm} becomes $\delta\UN > \pi/2 > \sqrt\error$, and since $\error \le 1$, the latter inequality is trivial.
\end{proof}

Finally, Theorem \ref{theorem2} yields Corollary \ref{corollary1}, which can be proven as follows.

\begin{proof}[Proof of Corollary \ref{corollary1}.] For unitaries, $\La = \Ha=1$, $t=\pi/2$, and $\sparseness=\UN$, so Eq.\ \eqref{eq:th2} gives an upper bound of
\begin{equation}
O\left( \UN^{2/3} (\log\log\UN)^{4/3} \error^{-1/3} \right)
\end{equation}
queries, as claimed.
The condition $\delta\sparseness > \La t > \sqrt\error$ becomes $\delta\UN > \pi/2$ for the same reason as in the proof of Corollary \ref{corollsm} above.
If $\delta\UN \le \pi/2$, then we instead implement the unitary using Theorem \ref{thm:exact}.  This takes $O(N)$ queries, which is smaller than the claimed upper bound.
\end{proof}

In the worst case, Corollary \ref{corr11} yields query complexity of $O(\UN^{3/4}/\sqrt\error)$.
This is because $\Lm$ could be as large as 1 and $\Li$ could be as large as $\sqrt \UN$.
On the other hand, if the nonzero matrix elements are of similar magnitude, then $\Lm\propto 1/\Li$, so the scaling will be $O(\sqrt{\UN/\error})$.
Alternatively, if it is possible to break the unitary into components without otaining large spectral norms, then $\brk(U)=O(1)$, and Corollary \ref{corollsm} yields scaling of $\tilde O(\sqrt{\UN/\error})$.
Those results are not sufficient to prove this scaling for all unitaries, because $\brk(U)$ may be large.
However, in general we can use Corollary \ref{corollary1} to implement any unitary with $\tilde O( \UN^{2/3} \error^{-1/3} )$ queries.

\section{Examples}
\label{sec:examp}

We now consider some simple examples of unitaries and discuss the query complexity of implementing them by the methods of the previous section.

First, consider the unitary with matrix elements $U_{jk} = g(j+k \bmod \UN)$, where $g$ is a black-box function for a search problem with a unique marked item $j^\star$. The function $g\colon \{1,2,\ldots,N\} \to \{0,1\}$ satisfies $g(j^\star)=1$ and $g(j)=0$ for $j\ne j^\star$. Since $U\ket{0} = \ket{j^\star}$, implementing $U$ solves the search problem; thus it requires $\Omega(\sqrt{\UN})$ queries \cite{Grovopt}. This unitary has $\|U\|_{\rm max}=1$ and $\|U\|_1=1$, so Corollary \ref{corr11} gives a complexity of $O(\sqrt {\UN/\error})$, which is optimal. In fact, simply implementing the isometry $T$ solves the search problem, because it can prepare the state $T\ket{0} = \ket{0}\ket{j^\star}$. The implementation of $T$ in this case is in fact equivalent to the standard Grover search algorithm \cite{Grover96}.

In this case, $\Sj$ is known, so the implementation can be performed exactly. Using Lemma \ref{lemma3}, unitaries may be implemented exactly using a quantum walk, so the only remaining source of error is in performing the steps of the quantum walk. From the proof of Lemma \ref{lemma4}, the steps of the quantum walk may be performed exactly if $r_j^{\rm opt}$ from Eq.\ \eqref{eq:rjopt} is a known integer. Because we have $\Sj=\unlaziness=\Li=1$, we can easily adjust $X$ to ensure that this is the case, and therefore that the simulation is performed exactly. More generally, whenever $U$ is a permutation matrix, it can be implemented in only $O(\sqrt{\UN})$ queries in a similar fashion.

Another simple example is the quantum Fourier transform, the unitary with $U_{jk}=e^{2\pi ijk/\UN}/\sqrt{\UN}$. For this unitary, $\|U\|_{\rm max}=1/\sqrt{\UN}$ and $\|U\|_1=\sqrt{\UN}$. Therefore, Corollary \ref{corr11} again gives a complexity of $O(\sqrt {\UN/\error})$. In fact, for this case we can take $\unlaziness=1$, $\Sj=\Li=\sqrt{\UN}$ and $X=1/\sqrt{N}$, so Eq.\ \eqref{eq:rjopt} gives $r_j^{\rm opt}=0$. Therefore no amplitude amplification is required, and the implementation is again exact.

These two examples illustrate the two extremal cases where Corollary \ref{corr11} gives scaling of $\sqrt{\UN}$.
First, if all the weight is on one matrix element in each row, then $\|U\|_1=1$.
At the other extreme, if the weight is evenly distributed between the matrix elements, then $\|U\|_1=\sqrt{\UN}$, but $\|U\|_{\rm max}=1/\sqrt{\UN}$.
These correspond to the two points listed at the end of Sec.\ \ref{sec:full}.
In either case, the nonzero matrix elements have the same magnitude.

Note that to take advantage of sparsity, the locations of the nonzero elements must be known (or more precisely, their locations must be accessible via the oracle $O_F$).
Effectively, the quantity $\sparseness$ measures how many matrix elements are not known to be zero.
For the search problem, the locations of the nonzero elements are not known in advance (finding those positions would in itself solve the search problem), so $\sparseness=\UN$.
In contrast, for the norms, it does not matter if the nonzero elements are in known positions.
If there are $m$ nonzero matrix elements, then $\|U\|_1\le\sqrt{m}$ regardless of the positions of those elements.

For Corollary \ref{corr11} to yield scaling worse than $\sqrt{\UN}$, the distribution of magnitudes of matrix elements of $U$ must have a sharp peak in combination with a relatively broad distribution for the remaining elements.
As a natural example of this, consider the unitary given by $U=\exp(-i\pi J_x/2)$, where $J_x$ is the $x$-rotation operator for a spin-$J$ system, with dimension $\UN=2J+1$, and we use the basis of $J_z$ eigenstates.
The first column of $\exp(-i\pi J_x/2)$ has a relatively narrow peak, whereas for columns towards the middle the elements are more spread out (see Fig.\ \ref{fig:rot}).
The maximum element of $U$ has absolute value $\sqrt{(2\ceil{J})!}/{2^{\ceil{J}} \ceil{J}!}$, which is $O(J^{-1/4})$ by Stirling's formula.
Since $\|U\|_1 = O(\sqrt{J})$, Corollary \ref{corr11} yields an overall number of black-box queries of $O(\UN^{5/8}/\error^{1/2})$.

\begin{figure}
\centering
\includegraphics[width=0.45\textwidth]{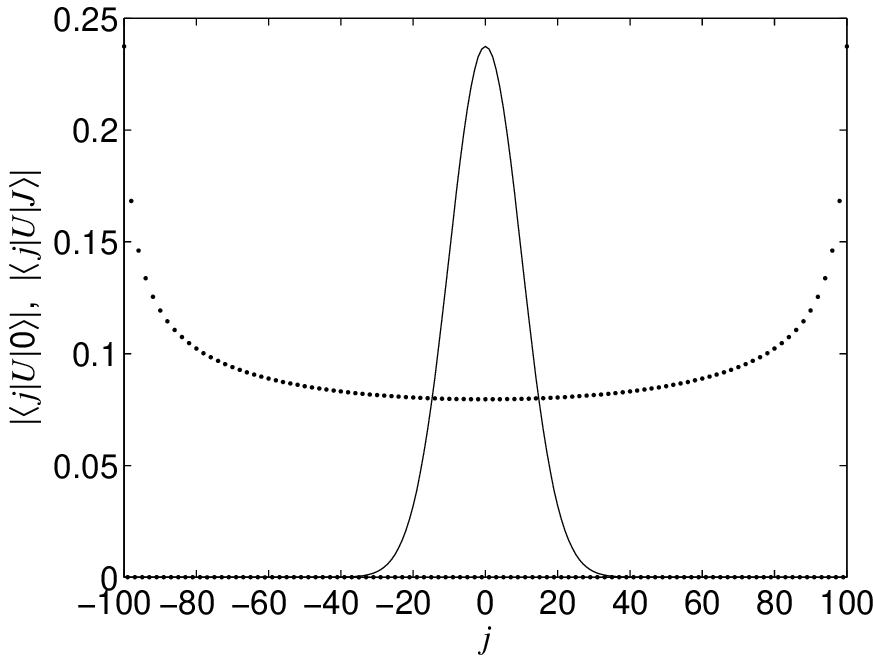}
\caption{The matrix elements of $U=\exp(-i\pi J_x/2)$ in the basis of $J_z$ eigenstates for $J=100$. The separate points are $|\bra{j}U\ket{0}|$, and the solid curve is $|\bra{j}U\ket{J}|$.}
\label{fig:rot}
\end{figure}

Thus Corollary \ref{corr11} does not provide $\sqrt{\UN}$ scaling in this case, though the scaling is better than the worst-case $\UN^{3/4}$ scaling.
Using Corollary \ref{corollary1} would not yield improved scaling in this example, because $2/3>5/8$.
However, numerical testing indicates that breaking this unitary into components does not increase the spectral norms, so using the approach given in Sec.\ \ref{sec:smallnorms} would yield $\sqrt{\UN}$ scaling.
In this example, calculating the matrix elements of $U$ is nontrivial, and consequently the overall complexity of the algorithm in terms of elementary gates would be greater than $\sqrt{\UN}$.

We emphasi{\s}e that the motivation to implement unitaries is not as a shortcut to simulation of Hamiltonians via $U=e^{-iHt}$. In general, calculating the matrix elements of $U=e^{-iHt}$ given the matrix elements of $H$ may be difficult. Rather, the motivation for implementing unitaries is to provide a tool to develop other algorithms. As discussed above, the search problem may be encoded as a unitary operation. The algorithm for implementing unitaries may be regarded as a new generali{\s}ation of the Grover algorithm.

\section{Conclusion}
\label{sec:conc}

We have shown how to use quantum walks to simulate black-box Hamiltonians.
In particular, we showed that these techniques can be used to implement an arbitrary $\UN\times\UN$ unitary transformation using $\tilde O(\UN^{2/3}/\error^{1/3})$ queries to a black box for its matrix elements, with error at most $\error$ as quantified by the trace distance.

Our approach is based on simulating Hamiltonian dynamics via discrete-time quantum walk \cite{Childs08}, combined with state preparation via amplitude amplification \cite{Grover00} and integrators to break up the Hamiltonian. In many cases the implementation can be performed even faster, with $\tilde O(\sqrt{\UN/\error})$ black-box calls. This scaling can be achieved except when breaking the Hamiltonian into a sum of terms yields components with large spectral norms, and numerical testing suggests that such cases are rare.

For many applications, our work provides the best known simulation of sparse Hamiltonians.  The number of queries is strictly linear in $\Ha t$, rather than slightly superlinear, as when higher-order integrators are used \cite{Berry07,Childs04}. In addition, the scaling in the sparseness parameter $\sparseness$ is at worst linear, in contrast with the $O(\sparseness^4)$ scaling of Ref.\ \cite{Berry07}.

The best lower bound we know for black-box unitary implementation is $\Omega(\sqrt{\UN})$ queries, because implementing an $\UN \times \UN$ unitary suffices to solve unstructured search with $\UN$ items. It remains an open problem to determine whether it is possible to perform the simulation using $O(\sqrt{\UN})$ queries in general.

Our results also apply to more general Hamiltonian simulation problems.  It might be interesting to investigate the extent to which the general black-box Hamiltonian simulation described by Theorem \ref{theorem2} can be improved.  Simulations using $O(\|Ht\|)$ queries are not possible in general \cite{Childs09}, but the tradeoff between quantities such as $\sparseness$, $\|Ht\|$, $\|Ht\|_1$, $\|Ht\|_{\max}$, and $\error$ is poorly understood.

\acknowledgements

We thank Richard Cleve, Aram Harrow, Stephen Jordan, Robin Kothari, and John Watrous for helpful discussions.
DB is funded by an Australian Research Council Future Fellowship (FT100100761).
AMC received support from MITACS, NSERC, QuantumWorks, and the US ARO/DTO.


\begin{thebibliography}{99}
\bibitem{Feynman82}
R. P. Feynman,
\emph{Simulating physics with computers},
International Journal of Theoretical Physics \textbf{21}, 467 (1982).
\bibitem{Childs03} 
A. M. Childs, R. Cleve, E. Deotto, E. Farhi, S. Gutmann, and D. A. Spielman,
\emph{Exponential algorithmic speedup by quantum walk},
in \emph{Proceedings of the 35th ACM Symposium on Theory of Computing} (ACM, New York, 2003), pp. 59--68; arXiv:quant-ph/0209131.
\bibitem{Farhi08}
E. Farhi, J. Goldstone, and S. Gutmann,
\emph{A quantum algorithm for the {H}amiltonian {NAND} tree},
Theory of Computing \textbf{4}, 169 (2008); arXiv:quant-ph/0702144.
\bibitem{Childs09b}
A. M. Childs, R. Cleve, S. P. Jordan, and D. Yonge-Mallo,
\emph{Discrete-query quantum algorithm for NAND trees},
Theory of Computing \textbf{5}, 119 (2009);
quant-ph/0702160.
\bibitem{Harrow09}
A. W. Harrow, A. Hassidim, and S. Lloyd,
\emph{Quantum algorithm for linear systems of equations},
Phys. Rev. Lett. \textbf{103}, 150502 (2009).
\bibitem{Lloyd96}
S. Lloyd, 
\emph{Universal quantum simulators},
Science \textbf{273} 1073 (1996).
\bibitem{Aharonov03}
D. Aharonov and A. Ta-Shma,
\emph{Adiabatic quantum state generation and statistical zero knowledge},
in \emph{Proceedings of the 35th ACM Symposium on Theory of Computing, 2003} (ACM, New York, 2003), pp. 20--29; arXiv:quant-ph/0301023.
\bibitem{Childs04} 
A. M. Childs,
\emph{Quantum information processing in continuous time},
Ph.D. thesis, Massachusetts Institute of Technology, 2004.
\bibitem{Berry07} 
D. W. Berry, G. Ahokas, R. Cleve, and B. C. Sanders, 
\emph{Efficient quantum algorithms for simulating sparse Hamiltonians},
Commun. Math. Phys. \textbf{270}, 359 (2007);
arXiv:quant-ph/0508139.
\bibitem{Childs08} 
A. M. Childs, 
\emph{On the relationship between continuous- and discrete-time quantum walk},
Commun. Math. Phys. \textbf{294}, 581 (2009);
arXiv:0810.0312.
\bibitem{Childs11}
A. M. Childs and R. Kothari,
\emph{Simulating sparse Hamiltonians with star decompositions},
Theory of Quantum Computation, Communication, and Cryptography (TQC 2010), Lecture Notes in Computer Science \textbf{6519}, 94 (2011);
arXiv:1003.3683.
\bibitem{Grover00} 
L. K. Grover,
\emph{Synthesis of quantum superpositions by quantum computation},
Phys. Rev. Lett. \textbf{85}, 1334 (2000).
\bibitem{Reck94}
M. Reck, A. Zeilinger, H. J. Bernstein, and P. Bertani,
\emph{Experimental realization of any discrete unitary operator},
Phys. Rev. Lett. {\bf 73}, 58 (1994).
\bibitem{Barenco95}
A. Barenco, C. H. Bennett, R. Cleve, D. P. DiVincenzo, N. Margolus, P. Shor,
T. Sleator, J. Smolin, and H. Weinfurter,
\emph{Elementary gates for quantum computation},
Phys. Rev. A {\bf 52}, 3457 (1995);
quant-ph/9503016.
\bibitem{SoloKit1}
A. Y. Kitaev,
\emph{Quantum computations: Algorithms and error correction},
Russ. Math. Surveys \textbf{52}, 1191 (1997).
\bibitem{SoloKit2}
A. Y. Kitaev, A. H. Shen, and M. N. Vyalyi,
\emph{Classical and Quantum Computation},
Graduate Studies in Mathematics Vol. 47 (American Mathematical Society, Providence, RI, 2002).
\bibitem{Harrow02}
A. W. Harrow, B. Recht, and I. L. Chuang, 
\emph{Efficient discrete approximations of quantum gates},
J. Math. Phys. \textbf{43}, 4445 (2002).
\bibitem{Knill95}
E. Knill,
\emph{Approximation by quantum circuits},
Technical Report LAUR-95-2225, Los Alamos National Laboratory, 1995; arXiv:quant-ph/9508006.
\bibitem{Jordan09} 
S. P. Jordan and P. Wocjan, 
\emph{Efficient quantum circuits for arbitrary sparse unitaries},
Phys. Rev. A {\bf 80}, 062301 (2009);
arXiv:0904.2211.
\bibitem{Grovopt} 
C. H. Bennett, E. Bernstein, G. Brassard, and U. Vazirani,
\emph{Strengths and weaknesses of quantum computing},
SIAM J. Comput. \textbf{26}, 1510 (1997);
arXiv:quant-ph/9701001.
\bibitem{Szegedy} 
M. Szegedy,
\emph{Quantum speed-up of Markov chain based algorithms},
Proceedings of the 45th IEEE Symposium on Foundations of Computer Science (IEEE, Los Alamitos, CA, 2004), pp. 32--41; arXiv:quant-ph/0401053.
\bibitem{Brassard97} 
G. Brassard and P. H{\o}yer, 
\emph{An exact quantum polynomial-time algorithm for Simon's problem},
in \emph{Proceedings of Fifth Israeli Symposium on Theory of Computing and Systems} (IEEE, Los Alamitos, CA, 1997), pp. 12--23; arXiv:quant-ph/9704027.
\bibitem{Brassard} 
G. Brassard, P. H{\o}yer, M. Mosca, and A. Tapp, in \emph{Quantum Computation and Information}, edited by S. J. Lomonaco and
H. E. Brandt (AMS, Providence, 2002); arXiv:quant-ph/0005055.
\bibitem{Grover98} 
L. K. Grover,
\emph{Quantum computers can search rapidly by using almost any transformation},
Phys. Rev. Lett. \textbf{80}, 4329 (1998).
\bibitem{Grover96}
L. K. Grover, 
\emph{A fast quantum mechanical algorithm for database search},
in \emph{Proceedings of the 28th Annual ACM Symposium on Theory of Computing} (ACM, New York, 1996), pp. 212--219;
arXiv:quant-ph/9605043.
\bibitem{Strang}
G. Strang,
\emph{On the construction and comparison of difference schemes},
SIAM J. Numer. Anal. \textbf{5}, 506 (1968).
\bibitem{Suzuki90}
M. Suzuki,
\emph{Fractal decomposition of exponential operators with applications to many-body theories and Monte Carlo simulations},
Phys. Lett. A \textbf{146}, 319 (1990).
\bibitem{Suzuki91}
M. Suzuki,
\emph{General theory of fractal path integrals with applications to many-body theories and statistical physics},
J. Math. Phys. \textbf{32}, 400 (1991).
\bibitem{Nathan10}
N. Wiebe, D. W. Berry, P. H{\o}yer, and B. C. Sanders,
\emph{Higher order decompositions of ordered operator exponentials},
J. Phys. A: Math. Theor. \textbf{43}, 065203 (2010);
arXiv:0812.0562.
\bibitem{Childs09}
A. M. Childs and R. Kothari, 
\emph{Limitations on the simulation of non-sparse Hamiltonians},
Quantum Inform. Comput. \textbf{10}, 669 (2010); arXiv:0908.4398.
\end{thebibliography}
\end{document}